%% file: main.tex
\documentclass[11pt]{article}
\usepackage{style}
\usepackage{shortcuts}
\pdfoutput=1

\author{Amir Abboud\thanks{Weizmann Institute of Science. Part of this work was done while visiting the Courant Institute of Mathematical Sciences at New York University. This work is part of the project CONJEXITY that has received funding from the European Research Council (ERC) under the European Union's Horizon Europe research and innovation programme (grant agreement No.~101078482).}
\and Ron Safier\thanks{Weizmann Institute of Science.}
\and Nathan Wallheimer\footnotemark[2]
}

\date{}

\title{Equivalent Dichotomies for Triangle Detection in Subgraph, Induced, and Colored $H$-Free Graphs}

\begin{document}
    \maketitle
    \begin{abstract}
        A recent paper by the authors (ITCS'26) initiates the study of the Triangle Detection problem in graphs avoiding a fixed pattern $H$ as a subgraph and proposes a \emph{dichotomy hypothesis} characterizing which patterns $H$ make the Triangle Detection problem easier in $H$-free graphs than in general graphs.
        In this work, we demonstrate that this hypothesis is, in fact, equivalent to analogous hypotheses in two broader settings that a priori seem significantly more challenging: \emph{induced} $H$-free graphs and \emph{colored} $H$-free graphs.

        Our main contribution is a reduction from the induced $H$-free case to the non-induced $\Hplus$-free case, where $\Hplus$ preserves the structural properties of $H$ that are relevant for the dichotomy, namely $3$-colorability and triangle count. A similar reduction is given for the colored case.

        A key technical ingredient is a self-reduction to Unique Triangle Detection that preserves the induced $H$-freeness property, via a new color-coding-like reduction.
    \end{abstract}

    \clearpage
    \thispagestyle{empty} 
    \tableofcontents
    \clearpage
    \setcounter{page}{1}

    \section{Introduction}
        \label{sec:introduction}
        \input{sections/introduction.tex}

    \section{Related Work}
        \label{sec:related}
        \input{sections/related.tex}

    \section{Preliminaries}
        \label{sec:preliminaries}
        \input{sections/preliminaries.tex}

    \section{\texorpdfstring{Induced $H$-Free to $\Hplus$-Free}{Induced H-Free to H+-Free}}
        \label{sec:induced}
        \input{sections/induced.tex}

    \section{\texorpdfstring{Colored $H$-Free to $\Hstar$-Free}{Colored H-Free to H*-Free}}
        \label{sec:colored}
        \input{sections/colored.tex}

    \section{Equivalence of the Three Hypotheses}
        \label{sec:equiv}
        \input{sections/equiv.tex}

    \section{Applications for Specific Patterns}
        \label{sec:applications}
        \input{sections/applications.tex}

    \section{\texorpdfstring{Generalization to $k$-Clique Detection}{Generalization to k-Clique Detection}}
        \label{sec:k-clique}
        \input{sections/k_clique.tex}

    \section{Conclusions}
        \label{sec:conclusions}
        \input{sections/conclusions.tex}

    \section*{Acknowledgments}
        We thank Shyan Akmal for inspiring discussions.

    \bibliographystyle{alpha}
    \bibliography{references}

\end{document}

%% file: sections/introduction.tex
The Triangle Detection problem asks to decide if a given $n$-node graph $G$ contains a triangle. It is arguably the
simplest problem without a near-linear time algorithm (i.e., running in $m^{1+o(1)}$ time in connected graphs with $m$
edges) and serves as the underlying ``complexity atom'' at the core of fine-grained hardness assumptions. For dense
inputs (e.g. where $m = \Omega(n^2)$), the trivial algorithm runs in $\Order(n^3)$ time. A truly subcubic
$n^{3-\Omega(1)}$ running time~\cite{itai1977finding} can be achieved by algebraic techniques: $\Order(n^{\omega})$
where $\omega < 2.371339$ via Fast Matrix Multiplication (FMM)~\cite{alman2025more} or a higher bound via Fast Fourier
Transform (FFT)~\cite{uffenheimer2025approximate,CohnKSU05}. The best bound achievable by other, so-called
``combinatorial'' methods~\cite{arlazarov1970economical,BW12,chan2014speeding,yu2018improved,abboud2024new} is mildly
subcubic $n^3/2^{\log^{\Omega(1)}n}$. Faster, classical algorithms are known for sparse
graphs~\cite{itai1977finding,alon1997finding}, achieving time $\Order(m^{3/2})$ combinatorially and
$\Order(m^{\frac{2\omega}{\omega+1}})$ through FMM.

\medskip

Recently, the authors~\cite{abboud2026triangle} posed the following natural question:
\paragraph*{Main Question:}{\emph{For which patterns $H$ can we solve Triangle Detection in $H$-free graphs faster than
in general graphs?}}

\medskip

While this question can be asked with respect to each of the aforementioned classical bounds for general
graphs,~\cite{abboud2026triangle} suggests starting with the clean question of whether the $H$-freeness restriction
allows one to break the cubic barrier with a combinatorial algorithm. Notably, due to a reduction by Vassilevska and
Williams~\cite{williams2010subcubic}, breaking this barrier in general graphs is equivalent to refuting the famous
Boolean Matrix Multiplication (BMM) conjecture.

\begin{conjecture}[BMM]
   \label{conj:bmm}
   There is no truly subcubic combinatorial Triangle Detection (equivalently, BMM) algorithm (in general $n$-node
   graphs).
\end{conjecture}

Thus, the main question above can be restated as follows:

\begin{openquestion}
\label{oq1}
For which patterns $H$ can Triangle Detection in $H$-free graphs be solved combinatorially in truly subcubic time?
\end{openquestion}

A major motivation for studying combinatorial algorithms for Triangle Detection generally, and Open Question~\ref{oq1}
specifically, is the following. The algebraic $n^{3-\Omega(1)}$ algorithms for Triangle Detection (via FMM or FFT) are
well-known to fail to generalize for important variants of the basic problem, and one may hope that a different,
``combinatorial'' algorithm would not suffer from the same limitations. Such an achievement would have groundbreaking
consequences, even if the new algorithm only overcomes \emph{some} of these limitations. For example, a truly subcubic
algorithm that generalizes to the \emph{minimum weight} setting would refute the All-Pairs Shortest-Paths (APSP)
Conjecture, an algorithm that generalizes to the \emph{listing} setting would refute the $3$-SUM Conjecture, an
algorithm for the \emph{online} setting would refute the Online Matrix Vector Multiplication (OMv) hypothesis, and so
on. For this reason, we follow the convention in the literature on combinatorial algorithms of leaving the notion
flexible: as explained in~\cite{abboud2024new}, the goal is to find new techniques that would generalize to \emph{any}
setting of interest. See~\cite[Section 1.1]{abboud2024new} for a broader discussion and additional motivations for
studying combinatorial algorithms. Thus, the quest to develop a truly subcubic combinatorial algorithm for Triangle
Detection is at the forefront of algorithmic attacks on the conjectures of fine-grained complexity.\footnote{While
fine-grained complexity is often presented as showing conditional lower bounds for a vast number of problems assuming
the hardness of a small set of core problems, one should also keep in mind the interpretation that algorithmic research
on a vast number of problems should be re-directed to the few core problems, because progress will be made only by
finding new techniques that break the conjectures for the core problems.}\footnote{Note that a ``combinatorial''
algorithm is not guaranteed to achieve any of these consequences, but one hopes that it would achieve at least one of
them.} A promising line of attack, related to modern algorithmic paradigms such as expander
decompositions~\cite{saranurak2019expander, abboud2023worst,abboud2024worst} and regularity
lemmas~\cite{BW12,abboud2024new}, is via a ``structure vs. randomness'' approach where a general graph is decomposed
into random parts -- for which the problem becomes easy -- and structured parts (and a sparse part). The question
becomes: which notion of structure can be used in such an approach? (See~\cite{abboud2026C4Detection} for a novel notion
recently employed successfully in the context of subgraph detection.) One natural candidate is $H$-freeness, and Open
Question~\ref{oq1} aims to identify the cases in which it is useful.

In the same paper~\cite{abboud2026triangle}, the authors present both positive and negative results, leading them to
suggest the following ``dichotomy hypothesis'' as a crisp characterization of the answer to Open Question~\ref{oq1}.

\begin{hypothesis}[The Dichotomy Hypothesis for Triangle Detection in $H$-Free Graphs]
   \label{hypothesis:subgraph}
   Triangle Detection in $H$-free graphs is:
   \begin{itemize}
      \item ``easy'', i.e., solvable in $n^{3-\Omega(1)}$ time via a combinatorial algorithm, if $H$ is
         \emph{$3$-colorable and contains at most one triangle}; and
      \item ``hard'', i.e., requires $n^{3-o(1)}$ time combinatorially, otherwise.
         \footnote{In~\cite{abboud2026triangle}, the authors define ``hard'' to mean that it is as hard as the general
         case, meaning that there is a BMM-based lower bound. This makes it technically possible that a pattern $H$ is
         neither ``easy'' nor ``hard'' (e.g., if one can only show a lower bound under a different assumption). For
         simplicity, we choose the more relaxed statement.}
   \end{itemize}
\end{hypothesis}

The results of~\cite{abboud2026triangle} confirm this hypothesis for any pattern of size up to $8$~\cite[Proposition
1.3]{abboud2026triangle}. Their hardness reductions apply if $H$ is not $3$-colorable or if it contains more than one
triangle, i.e., to all the ``hard'' cases in the hypothesis. However, their algorithmic results can only handle a subset
of the (supposedly) ``easy'' cases. In particular, a $3$-colorable pattern $H$ with one triangle is proven to be
``easy'' if it admits a \emph{degenerate coloring} (see Definition~\ref{def:degenerate}).

\subsection{\texorpdfstring{What about \emph{induced} $H$-free graphs?}{What about induced H-free graphs?}}

   While~\cite{abboud2026triangle} start the investigation into Open Question~\ref{oq1} by focusing on the (simpler)
   non-induced \emph{subgraph} setting of $H$-freeness, the investigation is equally natural and well-motivated for the
   (typically more complex) \emph{induced} setting. Note that the class of induced $H$-free graphs generalizes $H$-free
   graphs: every $H$-free graph is also an induced $H$-free graph, but the converse does not hold. For instance, a
   clique of size $n$ is induced $P_k$-free, where $P_k$ is the $k$-vertex path, but it contains $P_k$ as a subgraph
   (assuming $3 \leq k \leq n$).

   The main question we ask is: \emph{how does the answer to Open Question~\ref{oq1} change in the induced setting?}
   Because $H$-free graphs are a subset of induced $H$-free graphs, any lower bound for Triangle Detection in $H$-free
   graphs also extends to the induced setting, i.e., to \emph{Triangle Detection in induced $H$-free graphs}. However,
   the same cannot be said about the upper bounds. In particular, the algorithmic techniques that were used
   in~\cite{abboud2026triangle} completely fail for the induced case, as we explain next.

   The primary algorithmic technique employed in~\cite{abboud2026triangle} is the \emph{embedding approach}. This method
   proceeds by iteratively embedding vertices of the pattern $H$ into the host graph $G$, which is assumed to be
   $3$-colored by color-coding~\cite{alon1995color}. A critical step in this recursion involves restricting the search
   space to the neighborhood of the currently embedded vertex. Specifically, when mapping a vertex $v \in V(H)$ to a
   vertex $u \in V(G)$, the algorithm restricts a specific color class of $G$ to the neighbors of $u$, effectively
   making $u$ universal to the remaining vertices in that color class. The paper shows that this reduces the density of
   $H$-free graphs for patterns that admit a degenerate coloring.

   However, the embedding approach is fundamentally incompatible with the induced setting. In the induced context, if
   the pattern $H$ contains a non-edge between $v$ and some unembedded vertex $w$, we are forbidden from mapping $w$ to
   a neighbor of $u$. Consequently, this technique is limited to a very restricted family of induced patterns.

   As a result, even for simple patterns like paths, results for the induced setting remain elusive. For general $k$, a
   combinatorial $n^{3-\Omega(1)}$ algorithm for Triangle Detection in induced $P_k$-free graphs is unknown. This stands
   in stark contrast to the subgraph setting, where (non-induced) $P_k$-free graphs have bounded arboricity by the
   Erd\H{o}s-Gallai theorem~\cite{gallai1959maximal}, allowing Triangle Detection to be solved trivially in $\Order(m) =
   \Order(n)$ time using the classic Chiba-Nishizeki algorithm~\cite{chiba1985arboricity}. This leads one to suspect
   that a classification that would answer Open Question~\ref{oq1} in the induced case would be very different
   from~\cref{hypothesis:subgraph}. At the very least, such a dichotomy appears significantly more challenging to
   establish.

   \paragraph*{Main Result:} The main result of this work is a reduction from the induced case to the non-induced case,
   showing a surprising \emph{equivalence} between the dichotomy hypotheses in the two settings.

   \begin{theorem}[Informal, see Theorem~\ref{thm:main}]
      \label{informal:induced-to-subgraph}
      The Dichotomy Hypothesis for Triangle Detection in $H$-free graphs holds \emph{if and only if} the same dichotomy
      hypothesis holds for induced $H$-free graphs.
   \end{theorem}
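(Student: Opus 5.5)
The statement asserts an equivalence between two hypotheses, so I would prove the two implications separately. One direction is essentially free; the other needs a new reduction from the induced setting to the non-induced one.

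\textbf{Hardness transfers directly.} Every $H$-free graph is induced $H$-free. Hence any combinatorial $n^{3-o(1)}$ lower bound for Triangle Detection in $H$-free graphs also holds in induced $H$-free graphs. Conversely, if the induced dichotomy holds and $H$ is ``hard'' (not $3$-colorable, or containing at least two triangles), the easy direction only lets us pass upper bounds downward. So I would instead use the known hardness reductions of~\cite{abboud2026triangle}, which cover all hard cases unconditionally. These settle the hard half of both hypotheses simultaneously. Similarly, an induced upper bound immediately gives a subgraph upper bound. Thus ``induced hypothesis $\Rightarrow$ subgraph hypothesis'' is immediate.

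\textbf{The reduction for the converse.} Here I must show that if every $3$-colorable pattern with at most one triangle is easy in the subgraph setting, then every such $H$ is easy in the induced setting. The plan is a reduction from Triangle Detection in induced $H$-free graphs to Triangle Detection in $\Hplus$-free graphs. The pattern $\Hplus$ should again be $3$-colorable with at most one triangle, so the assumed subgraph hypothesis applies to it. The steps are as follows.
\begin{enumerate}
\item Self-reduce to Unique Triangle Detection, i.e.\ to instances promised to have at most one triangle, while preserving induced $H$-freeness. Since we can only delete vertices or pass to induced subgraphs, I would use a color-coding-like random vertex sampling. That is, sample induced subgraphs at geometrically decreasing rates $p=2^{-i}$, so that for the right $i$ a unique triangle survives with constant probability. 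Checking the found triangle and repeating polylogarithmically many times handles the rest. Induced subgraphs of induced $H$-free graphs remain induced $H$-free, which is why only vertex sampling is allowed.
\item Make the graph tripartite by the standard $3$-color partition: take parts $A,B,C$ and keep only edges between different parts. This destroys inducedness of the host, so I would keep the original graph as the reference for forbidden structure.
\item Transform the instance into an $\Hplus$-free graph. I expect the graph to be triangle-sparse after step 1. Any copy of a suitably chosen $\Hplus$ in the new tripartite graph, with $\Hplus$ e.g.\ a blow-up or a ``plus'' gadget of $H$ that is dense enough, must force either an induced copy of $H$ or many triangles. The idea is a Ramsey-type argument: a subgraph copy of $\Hplus$ together with the non-edges of $H$ either yields an induced $H$ or yields an extra triangle, contradicting uniqueness. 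The construction must ensure that every extra edge within the embedded copy creates a triangle. This is plausible when extra edges go between different color classes that share a common neighbor in the copy.
\end{enumerate}
The step that defines $\Hplus$ should attach, for each non-adjacent pair $u,v$ of $H$ in different color classes, a common-neighbor path in the third class. This preserves $3$-colorability and adds no triangles.

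\textbf{Main obstacle.} The hard step is step 3, specifically extra edges inside a single color class of $H$'s $3$-coloring. Those are already removed by the tripartition, so the obstacle is really cross-class non-edges. Choosing $\Hplus$ so that an illegal extra edge always closes a triangle, without $\Hplus$ itself gaining a triangle, is the crux. I would first try attaching a private common neighbor in the third class to every cross-class non-adjacent pair of $H$. A spurious edge then creates a second triangle, which contradicts uniqueness, unless that triangle is the unique one. That special case I would handle by branching on the $\Order(1)$ vertices of the unique triangle and deleting them, at the cost of the earlier sampling step.
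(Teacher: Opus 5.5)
Your outer structure matches the paper: hardness transfers from $H$-free to induced $H$-free graphs by inclusion, the hard cases come from the known reductions, and induced upper bounds apply verbatim to $H$-free graphs. The gap is in the reduction for the easy case, starting with steps 1--2.

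\textbf{Isolation and coloring.} Sampling every vertex at a single rate $p=2^{-i}$ does not isolate a triangle with constant probability. In the book $K_{1,1,t}$ (an edge $uv$ with $t$ common neighbors; it is induced $P_4$-free, so it is a legal input for $H=P_4$), exactly one triangle survives with probability $tp^3(1-p)^{t-1}=O(1/t^2)$ for every $p$. So polylogarithmically many repetitions are not enough. Isolation with polylogarithmic overhead needs independent rates on the classes of a \emph{proper} $3$-coloring, and your step 2 cannot supply one without breaking step 3. If a random tripartition keeps the intra-part edges, only the \emph{colorful} triangle is isolated. If it deletes them, the graph handed to the oracle is no longer induced $H$-free. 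Keeping $G$ ``as reference'' does not help, because the $\Hplus$-free promise is about the oracle's instance. A copy of $\Hplus$ in the tripartite instance may realize its spurious non-edge $ab$ of $H$ by an edge of $G$ that was deleted (both endpoints in one part). The triangles closed by $ab$ are then absent from the instance, so nothing contradicts uniqueness there. For the same reason, your remark that same-class extra edges are ``already removed'' is backwards: the deletion leaves the copy of $\Hplus$ intact and only removes the triangle meant to give the contradiction.

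The missing ingredient is the paper's color-coding into properly $3$-colored \emph{induced} subgraphs:
\begin{itemize}
\item find a triangle or an $O(\sqrt n)$-coloring (Dumitrescu);
\item split color classes into pieces of size $O(\sqrt n)$;
\item output $G[C_i\cup C_j\cup C_k]$ for every triple of classes.
\end{itemize}
The per-class sieve then gives a genuinely unique triangle inside an induced subgraph. The total cost is $\tilde O(n^{2.5}+n^{3/2}T(\sqrt n))$, still subcubic when $T(N)=N^{3-\varepsilon}$.

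\textbf{The gadget.} One common neighbor per non-adjacent pair fails in the YES case. The single triangle closed by the spurious edge $ab$ may \emph{be} the unique triangle, so the instance can contain $\Hplus$ and the oracle's answer is meaningless. Your fix (branch on the vertices of the unique triangle and delete them) cannot be carried out. That triangle is unknown, since finding it is the task, and deleting it destroys the witness you must report. The paper instead attaches \emph{two} wedges $x_{uv},y_{uv}$ to every non-edge $uv$ of $H$. A spurious edge then closes two distinct triangles, contradicting uniqueness, while $\Hplus$ stays $3$-colorable with the same number of triangles.

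The wedges must also go on \emph{all} non-edges, not only on pairs in different classes of a fixed coloring of $H$. A spurious edge on an ungadgeted pair closes no triangle, so even in the NO case your argument no longer shows that the instances are $\Hplus$-free. Moreover, the coloring the host induces on an embedded copy need not match the one you fixed. The restriction is also unnecessary: a common neighbor of two same-colored vertices can take either remaining color.
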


   More explicitly, we show a reduction from Triangle Detection in induced $H$-free graphs to (non-induced)
   $\Hplus$-free graphs, such that if $H$ falls into the ``easy'' cases of Hypothesis~\ref{hypothesis:subgraph}, then so
   does $\Hplus$. Consequently, the same dichotomy hypothesis of~\cite{abboud2026triangle} not only answers the
   non-induced case but also the induced case. In other words,~\cref{hypothesis:subgraph} turns out to be tantamount to
   the following hypothesis (in which we only added the word ``induced'').

   \begin{hypothesis}[The Dichotomy in the Induced Setting]
      \label{hypothesis:induced}
      Triangle Detection in \emph{induced} $H$-free graphs is:
      \begin{itemize}
         \item ``easy'', i.e., solvable in $n^{3-\Omega(1)}$ time via a combinatorial algorithm, if $H$ is
            \emph{$3$-colorable and contains at most one triangle}; and
         \item ``hard'', i.e., requires $n^{3-o(1)}$ time combinatorially, otherwise.
      \end{itemize}
   \end{hypothesis}

   The main message of this result is that the induced setting will not be any more challenging than the non-induced
   case. From a positive perspective, this means that all we have to do to answer Open Question~\ref{oq1} for both
   non-induced \emph{and} induced is to find algorithms for the remaining ``easy'' cases of the (non-induced)
   hypothesis. From a negative perspective, to refute the dichotomy hypothesis proposed by~\cite{abboud2026triangle}, it
   is enough to find a hardness result for an \emph{induced} pattern.

   Our reduction can also be used to prove new positive results for specific patterns $H$ that map to cases already
   solvable by the algorithms of~\cite{abboud2026triangle} (i.e., those with a degenerate coloring). Interesting
   applications include the induced $P_6$ and induced $C_7$; see Section~\ref{sec:applications}.

   A main technical ingredient in our proof, which may be of independent interest, is a new ``color-coding'' style
   self-reduction for Triangle Detection that preserves the induced $H$-freeness.

   Finally, we note that our reduction also applies to the generalization of the problem to $k$-Clique Detection for any
   fixed $k \geq 3$, showing that the generalized dichotomy hypotheses for $k$-Clique Detection in $H$-free graphs and
   in induced $H$-free graphs are also equivalent. We discuss this generalization in more detail in \cref{sec:k-clique}.

\subsection{\texorpdfstring{Colored $H$-free graphs}{Colored H-free graphs}}

   Next, we consider another natural variant that appears incomparable to either the non-induced setting or the induced
   setting. The notion of \emph{colored $H$-freeness} was introduced by~\cite{abboud2026triangle} to exploit the
   structural properties of patterns that admit a degenerate coloring. Roughly speaking, a colored graph $G$ is
   considered free of a colored pattern $H$ if it contains no copy of $H$ that preserves its colors. The formal
   definition for $3$-colored graphs is provided below, though the concept generalizes naturally to general colorings.
   \begin{definition}[Colored copy of $H$ in $G$]
      \label{def:colored-copies}
      Given a $3$-colored graph $G$ with coloring $c_G \colon V(G) \to \Colors$, and a $3$-colored pattern $H$ with
      coloring $c_H \colon V(H) \to \Colors$, a \emph{colored copy of $H$} is an injective homomorphism $\phi \colon
      V(H) \to V(G)$ that preserves colors, i.e., $c_G(\phi(v)) = c_H(v)$ for every $v \in V(H)$. If $G$ contains no
      such copy, we say that $G$ is \emph{colored $H$-free}.
   \end{definition}

   Much like induced $H$-freeness, colored $H$-freeness is a generalization of the standard subgraph setting. It is
   important to note that a colored graph $G$ may be colored $H$-free even if it contains many copies of $H$ in the
   uncolored sense, provided the coloring of $G$ does not induce the specific coloring of $H$ on those copies. For
   instance, consider the specific $3$-coloring of a $6$-cycle (denoted $C_6$) depicted in \cref{fig:C6}.

   \begin{figure}[htbp]
      \centering
      \includegraphics[scale=1.0]{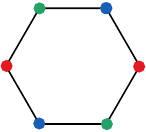}
      \caption{A specific $3$-coloring of $C_6$.}
      \label{fig:C6}
   \end{figure}

   It is a classical result that graphs excluding an uncolored $C_6$ as a subgraph must be
   sparse~\cite{bondy1974cycles}, allowing for an efficient Triangle Detection algorithm. In stark contrast, graphs
   avoiding a \emph{colored} $C_6$ can be dense. For example, consider the colored blowup of a $C_9$ in
   \cref{fig:C9-blowup}.

   \begin{figure}[htbp]
      \centering
      \includegraphics[scale=1.0]{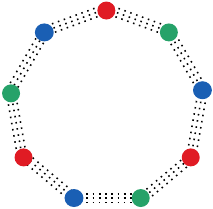}
      \caption{A colored blowup of $C_9$. This graph is obtained by replacing each vertex of $C_9$ with an independent
      set of size $n/9$ and every edge with a complete bipartite graph. All vertices within a single blown-up set share
      the same color.}
      \label{fig:C9-blowup}
   \end{figure}

   This graph is free of the colored $C_6$ pattern shown in \cref{fig:C6}, yet it contains $\Theta(n^2)$ edges.
   Consequently, it contains many copies of $C_6$ whose colorings fail to respect the color constraints. We remark that
   the embedding framework of~\cite{abboud2026triangle} fails to yield a truly subcubic algorithm for the colored $C_6$
   in \cref{fig:C6} because this coloring is not degenerate.

   Furthermore, observe that the construction in \cref{fig:C9-blowup} exhibits high ``colored density'': every vertex
   has $\Omega(n)$ neighbors in every color class other than its own. This property is significant because the
   algorithms in~\cite{abboud2026triangle} rely on efficiently eliminating vertices with low colored degree. For a dense
   graph like the colored $C_9$-blowup, where no such vertices exist, it is unclear how to detect triangles efficiently.

   This suggests that the colored setting may pose strictly harder challenges than the uncolored setting, making an
   answer to Open Question~\ref{oq1} for the colored setting seem further away.

   Our second result is another reduction showing that the colored case is \emph{also equivalent} to the non-induced and
   induced cases, in the sense that their dichotomy hypotheses stand or fall together.

   \begin{theorem}[Informal, see Theorem~\ref{thm:main}]
      The Dichotomy Hypothesis for Triangle Detection in $H$-free graphs holds \emph{if and only if} the same dichotomy
      hypothesis holds for \emph{colored} $H$-free graphs.
   \end{theorem}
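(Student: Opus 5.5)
The proof has two directions. The colored dichotomy (the analogue of Hypothesis~\ref{hypothesis:induced} for colored patterns) states that Triangle Detection in colored $H$-free graphs is easy iff $H$ has at most one triangle; a $3$-colored $H$ is automatically $3$-colorable.

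\textbf{Hardness direction.} Here I would use that any $H$-free graph $G$ can be made tripartite by the standard reduction (three copies of $V(G)$, with triangle detection preserved), and then colored by its parts. The resulting colored graph is still subgraph-$H$-free, because it is a subgraph of the $3$-blowup of $G$. I need to check that this blowup stays $H$-free, or else use that the hardness reductions of~\cite{abboud2026triangle} already output tripartite $H$-free instances. Any colored copy of $H$ is in particular an uncolored copy, so the instance is colored $H$-free for every coloring $c_H$. Hence a BMM-based lower bound for the subgraph setting transfers verbatim to the colored setting, and easiness in the colored setting implies easiness in the subgraph setting.

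\textbf{Algorithmic direction.} I would construct a reduction from Triangle Detection in colored $H$-free graphs to Triangle Detection in (uncolored, non-induced) $\Hstar$-free graphs. Here $\Hstar$ is $3$-colorable and has the same number of triangles as $H$. The plan is to attach color-identifying gadgets to both host and pattern:
\begin{itemize}
\item For each color $i\in\{1,2,3\}$, add a hub vertex $a_i$ adjacent to every vertex of color $i$ in $G$. Since each color class is independent in the tripartite host, this creates no new triangles.
\item Attach to $a_i$ a rigid, triangle-free tag $D_i$, for instance pendant paths or subdivided stars of pairwise distinct lengths, all longer than $|V(H)|$.
\item Form $\Hstar$ by doing the same to $H$ according to $c_H$.
\end{itemize}
Triangle count and $3$-colorability of $\Hstar$ are immediate: the hubs and tags are triangle-free and can be colored greedily, with $a_i$ receiving a color different from $i$. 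The triangle-detection answer is unchanged, and the size grows only by $O(n)$.

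\textbf{Main obstacle.} The main obstacle is the soundness claim: every uncolored copy of $\Hstar$ in $G'$ should yield a colored copy of $H$ in $G$. The difficulty is that an embedding might map a hub of $\Hstar$ to an original vertex of $G$, or map a tag into the dense part of $G$, since $G$ itself contains long paths. To handle this, I would first apply a self-reduction, analogous to the one to Unique Triangle Detection that preserves induced freeness, which restricts $G$ to a sparse-enough structured instance in which degrees and local structure separate hubs from original vertices. Alternatively, I would make the tags dense but triangle-free, e.g.\ large complete bipartite graphs $K_{t_i,t_i}$ with distinct $t_i$ much larger than $|V(H)|$ but still $O(n)$. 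The point is that only the gadget vertices have the required number of common neighbors, which pins the hubs. Once the hubs are forced onto the $a_i$, each vertex of $H$ of color $i$ is forced into a neighbor of $a_i$, that is, into color class $i$. This gives a color-preserving copy. Finally, I would combine this with Theorem~\ref{informal:induced-to-subgraph} to conclude the three-way equivalence of Theorem~\ref{thm:main}.
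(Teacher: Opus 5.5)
There is a genuine gap in your algorithmic direction (easy patterns, subgraph $\Rightarrow$ colored): the hubs of $\Hstar$ cannot be pinned onto the $a_i$. The hypothesis only provides algorithms for \emph{fixed} patterns. Tags $K_{t_i,t_i}$ with $t_i$ growing with $n$ are therefore not available, and any fixed-size triangle-free tag can embed inside the dense part of $G$.

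Concretely, let $H$ be a single edge $p_1p_2$ with $c_H(p_1)=1$ and $c_H(p_2)=2$. Let $G$ be the complete bipartite graph between the color-$3$ class and the union of the color-$1$ and color-$2$ classes. Then $G$ is properly $3$-colored, triangle-free, and colored $H$-free, since it has no edge between colors $1$ and $2$. Yet with path, subdivided-star or $K_{t,t}$ tags, your $\Hstar$ is a constant-size bipartite graph. It is therefore a subgraph of $G\subseteq G'$, and the oracle is invoked on an input that is not $\Hstar$-free. The same host also defeats nontrivial patterns, e.g.\ a path colored $1,2,1$.

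Your suggested self-reduction does not rescue this. The $O(\sqrt{n})$-vertex induced subgraphs produced by color-coding are again complete bipartite here, so they still contain $\Hstar$.

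The missing idea is that $G$ need not be modified at all. Its proper $3$-coloring is itself the rigidity mechanism, because every copy of any pattern in $G$ inherits a proper $3$-coloring. The paper builds a fixed pattern $\Hstar$ as follows:
\begin{itemize}
\item It uses triangle-free gadgets with rigid $3$-colorings. The Gr\"otzsch graph minus an edge forces equal colors on the two endpoints, and adding a pendant vertex turns this into an inequality gadget.
\item It combines these into a gadget $X$ whose three independent pairs must receive the three distinct unordered pairs of colors.
\item It joins each pair completely to one color class of a copy $H_\phi$ of $H$, taking one copy for each bijection $\phi$ between pairs and classes. In every $3$-coloring, some copy is then colored exactly by $c_H$.
\item When $H$ has a triangle, six copies of $H-\{x,y,z\}$ share one core triangle and are tied to it by equality gadgets, so only one triangle is created.
\end{itemize}
Consequently every colored $H$-free graph is already $\Hstar$-free, and the $\Hstar$-free algorithm is simply run on $G$ itself.

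A smaller point concerns your other direction. The three-copies graph is not a subgraph of $G$ and need not be $H$-free: applied to a single edge it yields $C_6$, which contains $P_3$. Your fallback to the outputs of earlier hardness reductions also does not cover transferring easiness from the colored setting to the subgraph setting, where arbitrary $H$-free inputs must be handled. Instead, take a random $3$-partition and delete intra-part edges. This gives a genuine subgraph of $G$ that keeps a fixed triangle with probability $2/9$; repeat it $O(\log n)$ times.
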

   Specifically, the Triangle Detection problem in colored $H$-free graphs is defined as follows: the input is a colored
   $H$-free graph $G$ (in particular, the input $G$ is provided together with a $3$-coloring), and the task is to decide
   whether $G$ contains a triangle.

   Just like in the induced case, we show a reduction from Triangle Detection in colored $H$-free graphs to (uncolored)
   $\Hstar$-free graphs, such that if $H$ falls into the ``easy'' cases of~\cref{hypothesis:subgraph} (regardless of its
   coloring), then so does $\Hstar$. Consequently, the same characterization suggested by~\cite{abboud2026triangle}
   extends to the colored case.

   \begin{hypothesis}[The Dichotomy in the Colored Setting]
      \label{hypothesis:colored}
      Triangle Detection in \emph{colored} $H$-free graphs is:
      \begin{itemize}
         \item ``easy'', i.e., solvable in $n^{3-\Omega(1)}$ time via a combinatorial algorithm, if $H$
            \emph{contains at most one triangle}; and
         \item ``hard'', i.e., requires $n^{3-o(1)}$ time combinatorially, otherwise.
      \end{itemize}
   \end{hypothesis}
   Unlike in the subgraph and induced settings, the colored dichotomy has no separate $3$-colorability condition in the
   ``easy'' case: the pattern $H$ is already $3$-colored by~\cref{def:colored-copies}. Moreover, the criterion depends
   only on the underlying uncolored graph $H$, and not on which $3$-coloring of $H$ is prescribed.

   \medskip
   To conclude, our results for the colored and induced settings imply our main result:
   \begin{theorem}
      \label{thm:main}
      The three hypotheses: \labelcref{hypothesis:subgraph}, \labelcref{hypothesis:induced,hypothesis:colored}, are all
      equivalent.
   \end{theorem}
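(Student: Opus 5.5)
We plan to prove Theorem~\ref{thm:main} by establishing the two equivalences \labelcref{hypothesis:subgraph}$\Leftrightarrow$\labelcref{hypothesis:induced} and \labelcref{hypothesis:subgraph}$\Leftrightarrow$\labelcref{hypothesis:colored}. Each equivalence splits into the ``hard'' half and the ``easy'' half. We handle the two halves separately. In each, we use the fact that the subgraph-setting classes sit inside the other two.

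\textbf{Hard halves.} The inclusions run from the subgraph setting outward. Every $H$-free graph is induced $H$-free, so a combinatorial lower bound in the subgraph setting transfers immediately to the induced setting. For the colored setting we also need a coloring of the input. Given an $H$-free graph $G$, we apply color-coding (or take the tripartite $3$-colored version of $G$, which is still $H$-free as a subgraph of a blowup-free structure). This yields a $3$-colored graph that is colored $H$-free for every $3$-coloring of $H$, so hardness transfers up to $n^{o(1)}$ factors.

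Conversely, the hardness results of \cite{abboud2026triangle} cover exactly the patterns that are non-$3$-colorable or contain at least two triangles. If \labelcref{hypothesis:induced} or \labelcref{hypothesis:colored} holds, its hard side covers these same patterns. For the colored case, $H$ is automatically $3$-colorable, so the hard condition reduces to ``at least two triangles.'' To transfer hardness back down to the subgraph setting, we reduce the induced or colored problem to the subgraph problem for a pattern with the same status. This is exactly the same reduction we use for the easy direction, applied contrapositively.

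\textbf{Easy halves.} This is where the paper's main reductions enter. For induced, we use the reduction from Triangle Detection in induced $H$-free graphs to Triangle Detection in $\Hplus$-free graphs (Section~\ref{sec:induced}). Its key property is that $\Hplus$ is $3$-colorable with at most one triangle whenever $H$ is. Under \labelcref{hypothesis:subgraph}, $\Hplus$-free is easy, hence induced $H$-free is easy. Similarly, the reduction to $\Hstar$-free graphs (Section~\ref{sec:colored}) gives the colored case, with $\Hstar$ $3$-colorable with at most one triangle whenever $H$ has at most one triangle.

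For the reverse direction, induced or colored easiness implies subgraph easiness directly, since $H$-free graphs form a subclass. For colored, we again color-code the input, which costs $2^{O(|H|)}\log n$ runs.

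For the hard side going up, we also need the property that if $H$ is hard then the target is hard, and the reductions must run in subcubic overhead so that they compose. The parameters must be checked so that an $n^{3-\epsilon}$ algorithm for $\Hplus$-free graphs on the $O(n)$-sized instances produced yields $n^{3-\epsilon'}$ overall.

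The main obstacle is the correctness of the induced reduction. It must produce an $\Hplus$-free graph while preserving triangle existence, and this relies on the self-reduction to Unique Triangle Detection that preserves induced $H$-freeness. With the reduction in hand, the theorem is bookkeeping. Without it, we would first try to show that after isolating a unique triangle, any non-induced copy of $\Hplus$ forces an induced copy of $H$.
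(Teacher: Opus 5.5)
Your easy halves (via $\Hplus$ and $\Hstar$) and your upward hard halves (inclusion for induced, color-coding a subgraph of $G$ for colored) match the paper's argument. The gap is the downward hard direction, i.e., deriving the hard side of Hypothesis~\ref{hypothesis:subgraph} from Hypothesis~\ref{hypothesis:induced} or~\ref{hypothesis:colored}. You propose to get this by applying the $\Hplus$/$\Hstar$ reductions contrapositively, and that step fails.

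Theorem~\ref{thm:induced-to-subgraph} reduces induced $H$-free instances to $\Hplus$-free ones. Its contrapositive therefore only says that hardness for induced $H$-free graphs implies hardness for $\Hplus$-free graphs, which is a statement about $\Hplus$, not $H$. Every $H$-free graph is $\Hplus$-free, so hardness of the larger class says nothing about $H$-free graphs. Moreover, not every hard pattern has the form $\Hplus$. For non-complete $H$, $\Hplus$ contains the $4$-cycle $u\,x_{uv}\,v\,y_{uv}$, and for complete $H$ it is a clique. So, e.g., the bowtie (two triangles sharing a vertex) is hard but is not of this form. On the colored side it is worse: Theorem~\ref{thm:colored-to-subgraph} only constructs $\Hstar$ when $H$ has at most one triangle, so for the hard colored patterns there is nothing to apply.

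No transfer is actually needed here. The hard side of Hypothesis~\ref{hypothesis:subgraph} is exactly the BMM-based lower bounds of \cite{abboud2026triangle}, which you already cite, and the paper simply invokes them in both downward directions. Strictly speaking, the hard side of Hypothesis~\ref{hypothesis:induced} or~\ref{hypothesis:colored} for even one pattern implies that general Triangle Detection needs $n^{3-o(1)}$ combinatorial time, i.e., BMM. From BMM, \cite{abboud2026triangle} gives the hard side of Hypothesis~\ref{hypothesis:subgraph}. Likewise, your remark that going up you need ``the target'' to be hard is unnecessary: upward hardness uses nothing about $\Hplus$ or $\Hstar$.

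There are also two smaller points.

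First, your parenthetical alternative to color-coding is false. The tripartite cover $G\times K_3$ is in general neither $H$-free nor colored $H$-free. A copy of $H$ there only projects to a possibly non-injective homomorphism $H\to G$; a single edge of $G$ already yields a $6$-cycle colored $1,2,3,1,2,3$. You need the random-partition version, which outputs a subgraph of $G$. There, $\Order(\log n)$ repetitions suffice because a fixed triangle survives with probability $2/9$, so no $2^{\Order(|H|)}$ factor arises.

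Second, in the bookkeeping, the instances from Theorem~\ref{thm:induced-to-subgraph} have $\Order(\sqrt{n})$ vertices; only their edge count is $\Order(n)$. The vertex count is what gives $n^{3/2}\cdot(\sqrt{n})^{3-\eps}=n^{3-\eps/2}$. By contrast, $\tOrder(n^{3/2})$ instances on $\Theta(n)$ vertices would not be subcubic.
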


\subsection{Technical overview}
   We begin by presenting the high-level idea behind our result for the induced setting. Given a pattern $H$, we begin
   with the following observation: If an induced $H$-free graph $G$ contains $H$ as a subgraph, the embedded copy of $H$
   in $G$ must also induce a non-edge of $H$ (otherwise, it would be an induced copy of $H$). We exploit this
   observation by augmenting $H$ with additional structure that, together with a non-edge of $H$ in $G$, implies the
   existence of many triangles in $G$. This makes the augmented pattern unlikely to exist in $G$: Intuitively, the
   ``hardest'' instances of detection problems like Triangle Detection have at most one witness.

   However, proving this intuition formally in the setting of induced $H$-free graphs presents technical difficulties.
   Specifically, what we need is a self-reduction to the \emph{unique} variant of the problem that preserves the induced
   $H$-freeness property. In the \emph{Unique Triangle Detection} problem, the input graph is promised to contain at
   most one triangle.

   The standard approach for proving self-reductions to the unique variant of Triangle Detection (and related problems
   like $k$-Clique Detection) relies on the input being color-coded. Standard color coding for Triangle Detection
   (e.g.,~\cite{alon1995color}) is usually done either by having $3$ independent copies of $V(G)$ and $3$ copies of
   $E(G)$ appearing between them, or by randomly partitioning the vertices into three buckets and deleting edges within
   each bucket. However, such transformations do not necessarily preserve the induced $H$-freeness property of the
   graph. We provide an alternative color-coding approach that preserves the $H$-freeness property. Specifically, we
   show a self-reduction from Triangle Detection in induced $H$-free graphs, to Triangle Detection in \emph{$3$-colored}
   induced $H$-free graphs:

   \begin{restatable}[\texorpdfstring{Color-coding induced $H$-free graphs}{Color-coding induced H-free
      graphs}]{theorem}{colorcoding}
      \label{lem:color-coding}
      There is a deterministic algorithm that reduces Triangle Detection to $3$-colored Triangle Detection, running in
      $\Order(n^{2.5})$ time. It either finds a triangle in $G$, or outputs $\Order(n^{3/2})$ instances, where each
      instance is a $3$-colored induced subgraph of $G$ on $\Order(\sqrt{n})$ vertices. Every triangle in $G$ is present
      in exactly one instance.
   \end{restatable}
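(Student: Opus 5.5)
The plan is to replace random color-coding by a deterministic \emph{partition into blocks} and to take only \emph{induced} subgraphs, so that induced $H$-freeness is inherited automatically. Induced subgraphs of an induced $H$-free graph are induced $H$-free. Concretely, I would fix an arbitrary partition $V(G)=V_1\cup\dots\cup V_t$ with $t=\lceil\sqrt n\rceil$ and $|V_i|=\Order(\sqrt n)$. For every triple $i<j<k$ the corresponding instance is $G[V_i\cup V_j\cup V_k]$, colored so that $V_i,V_j,V_k$ receive the three colors. This gives $\binom{t}{3}=\Order(n^{3/2})$ instances, each on $\Order(\sqrt n)$ vertices. Since the vertex sets are induced, nothing is deleted inside the blocks. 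The coloring is therefore only a vertex partition and need not be proper, and the $3$-colored problem asks for a rainbow triangle (one vertex of each color).

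The remaining issue is the triangles that are \emph{not} rainbow with respect to the block partition, i.e., triangles with two or three vertices in the same block. I would handle these directly in a preprocessing step, before producing any instance. Enumerate every edge $uv$ with both endpoints in the same block. There are at most $\sum_i |V_i|^2=\Order(n\cdot\sqrt n)=\Order(n^{1.5})$ such pairs. For each such edge, scan all $w\in V(G)$ in $\Order(n)$ time and check whether $w$ is adjacent to both $u$ and $v$, using an adjacency matrix. This costs $\Order(n^{2.5})$ in total, and any triangle found is output.

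If no triangle is found, then no triangle of $G$ has two vertices in a common block. Every triangle therefore lies in three distinct blocks $V_i,V_j,V_k$, and it appears as a rainbow triangle in exactly one instance, namely the one indexed by the sorted triple $\{i,j,k\}$. In any other instance it is absent, since at least one of its vertices is missing. Conversely, every rainbow triangle in an instance is a triangle of $G$, because the instances are subgraphs of $G$. Writing down all instances explicitly takes $\Order(n^{3/2})$ instances times $\Order(n)$ size each, i.e.\ $\Order(n^{2.5})$ time, which matches the claimed bound.

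The step I expect to need the most care is the interpretation of ``$3$-colored Triangle Detection'' when the induced subgraph keeps intra-block edges. The key point is that we may \emph{not} delete those edges, since deleting them could destroy induced $H$-freeness. This is exactly why the preprocessing must first certify that no triangle uses an intra-block edge. After that certification, the intra-block edges are harmless: a rainbow-triangle search ignores them, and even a plain triangle search on the instance could only find genuine triangles of $G$, all of which are rainbow. If the downstream problem requires a proper coloring, I would note that the induced $H$-freeness is what forces us to keep these edges, so the statement should be read with the rainbow-triangle semantics.
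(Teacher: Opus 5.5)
Your preprocessing is correct: it certifies in $\Order(n^{2.5})$ time that no triangle of $G$ uses an intra-block edge, so each triangle is rainbow in exactly one instance. However, it does not prove the theorem as stated. In this paper ``$3$-colored'' means \emph{properly} $3$-colored (Section~\ref{sec:preliminaries}). Your instances $G[V_i\cup V_j\cup V_k]$ keep all intra-block edges, so the three parts are not independent sets. This matters because the theorem feeds into the isolation lemma (Lemma~\ref{lem:sieving}), which is stated for properly $3$-colored inputs, and the paper explicitly warns that an arbitrary $3$-partition is not enough there. Your closing claim is also false: induced $H$-freeness does not \emph{force} you to keep monochromatic edges and settle for rainbow semantics. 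Those edges exist only because you chose an arbitrary partition.

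The missing idea is to make the partition a proper coloring with $\Order(\sqrt n)$ colors. Every triangle-free graph admits one, and it can be found (or a triangle exhibited) in $\Order(n^2)$ time. In a triangle-free graph every neighborhood is independent. So while some vertex $v$ has degree at least $\sqrt n$, take $\sqrt n$ of its neighbors. Either two of them are adjacent, giving a triangle with $v$, or they form an independent set, which you make a new color class and delete. This happens at most $\sqrt n$ times. Afterwards the maximum degree is below $\sqrt n$, and greedy coloring uses at most $\sqrt n$ further colors. This is Dumitrescu's algorithm, i.e., the $k=3$ case of Lemma~\ref{lem:clique-or-coloring}. Then split every class larger than $2\sqrt n$ into blocks of size $\sqrt n$; sub-blocks of an independent set stay independent, and only $\Order(\sqrt n)$ classes are added. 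Now each triple of classes induces a subgraph that is properly $3$-colored with no deletions. Every triangle has its vertices in three distinct classes, so it lies in exactly one instance, and no intra-block preprocessing is needed. Your weaker guarantee, a $3$-partition in which every triangle is rainbow, would likely still suffice downstream after re-checking the isolation argument under that hypothesis. As a proof of the stated theorem, though, it falls short exactly at properness.
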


   Note that, unlike standard color-coding, the output of this algorithm consists of \emph{induced} subgraphs of $G$
   (i.e., no edge deletions). Hence, if $G$ is induced $H$-free, then every instance in the output is induced $H$-free.
   Having this tool in hand, we can then apply a self-reduction to Unique Triangle Detection, such that every output
   instance is free from the augmented pattern of $H$.

   \paragraph*{Colored $H$-free graphs.} In the setting of colored $H$-free graphs, our strategy is to embed uncolored
   copies of $H$ within a larger $3$-colorable pattern $\Hstar$, but augment it to ensure that \emph{every} $3$-coloring
   of $\Hstar$ forces the specific forbidden coloring of $H$ embedded within $\Hstar$. Thus, if a $3$-colored graph $G$
   contains $\Hstar$ as a subgraph, the coloring that $G$ induces on $\Hstar$ will imply the existence of a forbidden
   colored copy of $H$ in $G$. This idea is implemented using various gadgets to enforce that different copies of $H$ in
   $\Hstar$ receive different colorings.

%% file: sections/related.tex
Various notions of $H$-freeness have been widely studied to understand the complexity of fundamental problems across
different computational regimes. The three primary notions prevalent in the literature are $H$-minor-freeness,
$H$-freeness (the subgraph setting), and induced $H$-freeness. Observe that the classes of graphs avoiding these
patterns form a strict hierarchy: $H$-minor-free $\subset$ $H$-free $\subset$ induced $H$-free.

Because $H$-minor-free graphs constitute a highly restricted family, many fundamental problems that are presumed hard on
general graphs become significantly easier in $H$-minor-free graphs for every fixed $H$ (yielding trivial dichotomies
where all patterns are classified as ``easy''). Examples in the coarse complexity regime include Max-Cut and Subgraph
Isomorphism, which become solvable in polynomial time~\cite{kaminski2012max, grohe2012fixed,
ponomarenko1988isomorphism}. In the fine-grained context, there are fast algorithms for Diameter, Distance Oracles, and
Decremental Reachability~\cite{ducoffe2022diameter, duraj2024better, le2024vc, karczmarz2025subquadratic}. For Triangle
Detection (and, more generally, $k$-Clique Detection), $H$-minor-free graphs have bounded
degeneracy~\cite{thomason1984extremal, kostochka1982minimum}. Consequently, the problem can be solved in linear time for
every pattern $H$ using the Chiba-Nishizeki algorithm, yielding a trivial dichotomy.

In the intermediate setting of $H$-free graphs, dichotomies are known for Max-Cut, Maximum Independent Set,
List-Coloring, and (very recently) Steiner
Forest~\cite{kaminski2012max,alekseev1992complexity,golovach2012coloring,eaglingvose2026steiner}. In the fine-grained
context, problems such as Diameter and APSP in dense graphs admit trivial dichotomies between cubic and truly subcubic
time. This arises because hard instances for these problems include bipartite graphs via well-known
reductions~\cite{roditty2013fast, williams2010subcubic}. Therefore, these problems remain hard for any non-bipartite
pattern $H$, whereas for bipartite patterns, the problems become easy due to the sparsity of $H$-free
graphs~\cite{kovari1954problem}. However, other dichotomies between linear and non-linear time can be established.
Recently, Johnson et al.~\cite{johnson2025complexity} established a dichotomy for Diameter in $H$-free graphs between
patterns solvable in linear time and those requiring quadratic time. We remark here that in the context of Triangle
Detection, a dichotomy between linear and non-linear time is conditionally established: Abboud et
al.~\cite{abboud2022hardness} show a $m^{1+\Omega(1)}$ lower bound for Triangle Detection in $C_4$-free graphs, and they
pose as an open question an extension to $C_k$-free graphs for every fixed $k \geq 4$. Assuming this holds, the ``easy''
cases (i.e., linear) are all forests, since such graphs have bounded arboricity by the Erd\H{o}s-Gallai theorem (so the
Chiba-Nishizeki algorithm solves them in linear time), and the ``hard'' cases (i.e., non-linear) are all patterns that
contain a cycle (i.e., non-forests).

In the most general setting of induced $H$-free graphs within the coarse complexity regime, dichotomies are known for
Coloring, Dominating Set, and Max-Cut~\cite{kral2001complexity, korobitsin1992complexity, kaminski2012max}. A long line
of work exists regarding the Maximum Independent Set problem in induced $H$-free graphs~\cite{alekseev1992complexity,
abrishami2022polynomial, gartland2024maximum, majewski2024max}, but a P/NP dichotomy has yet to be fully established.
The problem was also considered in approximation and parameterized settings~\cite{chudnovsky2020quasi,
bonnet2020parameterized}. In the fine-grained regime, the work of Oostveen et al.~\cite{oostveen2025complexity} implies
a dichotomy for Diameter between linear-time and non-linear-time patterns, assuming the optimality of current algorithms
for the Orthogonal Vectors problem in the high-dimensional setting (i.e., where $d = \Theta(n)$). They also demonstrate
an $m^{2\omega/(\omega+1)-o(1)}$ (or combinatorial $m^{3/2-o(1)}$) lower bound that holds for almost every ``hard''
pattern, conditioned on the conjecture that finding a simplicial vertex requires this amount of time. Note that in
weighted graphs, problems such as Diameter and APSP admit trivial dichotomies where all patterns are classified as
``hard''. This is because edges with infinite weight can be added to effectively form a clique (additionally, since
these problems admit hard bipartite instances, they remain hard even when excluding small cliques). Weighted Max-Cut
also admits a trivial dichotomy for the same reason; furthermore, it is known that the problem remains NP-hard even on
triangle-free graphs.\footnote{That is because a $2$-subdivision preserves the maximum cut up to an additive term of
$2m$.} For $k$-Clique Detection, the problem is known to be hard even for very small forbidden patterns. This follows
from results on the dual $k$-Independent Set problem~\cite{bonnet2020parameterized} (obtained via graph
complementation), although several positive algorithmic results have been achieved for specific
cases~\cite{husic2019independent, bonnet2020parameterized}.

%% file: sections/preliminaries.tex
Throughout, let $G=(V(G),E(G))$ be a graph with $n=|V(G)|$ vertices and $m = |E(G)|$ edges. An edge between vertices $u$
and $v$ is denoted by $uv$, and a triangle on vertices $x,y,z$ is denoted by $xyz$. If $uv \notin E(G)$, then we say
that $uv$ is a \emph{non-edge} of $G$. A graph is said to be \emph{properly $3$-colored} if each vertex is assigned one
of the three colors from the set $\Colors$, such that no two adjacent vertices share the same color. This extends to
$k$-colorings as well, where the set of colors is simply $\set{1,2,\ldots,k}$. Throughout the paper, we often omit the
word ``properly''; thus, a $k$-colored graph means a properly $k$-colored graph unless explicitly stated otherwise. A
\emph{color class} of a $k$-coloring is the set of vertices assigned a particular color. For a vertex $v \in V(G)$, its
neighborhood is denoted by $N(v)$.

A \emph{copy of $H$ in $G$} is an injective homomorphism $\phi \colon V(H) \to V(G)$; that is, $\phi$ is one-to-one and
for every edge $xy \in E(H)$, we have $\phi(x)\phi(y) \in E(G)$. We say that $G$ is \emph{$H$-free} if it contains no
copies of $H$. An \emph{induced copy of $H$ in $G$} is a mapping $\phi \colon V(H) \to V(G)$ such that $\phi$ is
one-to-one and $xy \in E(H)$ if and only if $\phi(x)\phi(y) \in E(G)$. We say that $G$ is \emph{induced $H$-free} if it
contains no induced copies of $H$.

Throughout this paper, we assume the standard Word-RAM model of computation with words of size $\Order(\log n)$. We use
$\tOrder(f(n))$ to hide polylogarithmic factors; that is, $\tOrder(f(n)) = \Order(f(n) \log^c n)$ for some constant $c$.

%% file: sections/induced.tex
In this section, we prove the following theorem:
\begin{restatable}{theorem}{inducedtosubgraph}
   \label{thm:induced-to-subgraph}
   For every $3$-colorable pattern $H$ containing at most one triangle, there exists a $3$-colorable pattern $\Hplus$
   containing the same number of triangles as $H$, such that Triangle Detection in induced $H$-free graphs reduces to
   Triangle Detection in $\Hplus$-free graphs. Specifically, suppose there is an algorithm for Triangle Detection in
   $\Hplus$-free graphs that runs in time $T(N)$ on $N$-vertex graphs and succeeds with probability at least $2/3$. In
   that case, there is an algorithm for Triangle Detection in induced $H$-free graphs running in time \[
   \tOrder\left(n^{2.5} + n^{3/2} \cdot T(\sqrt{n}) \right), \] and succeeding with probability at least
   $1-\Order(1/n)$.
\end{restatable}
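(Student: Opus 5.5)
Apply the color-coding self-reduction (Theorem~\ref{lem:color-coding}), then a self-reduction to Unique Triangle Detection that only takes induced subgraphs, and finally observe that a graph with few triangles cannot contain a suitably augmented pattern $\Hplus$ unless it contains an induced copy of $H$.

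\textbf{Step 1 (coloring).} Given an induced $H$-free graph $G$, run Theorem~\ref{lem:color-coding} in $\Order(n^{2.5})$ time. It either outputs a triangle, or produces $\Order(n^{3/2})$ instances, each a $3$-colored induced subgraph on $\Order(\sqrt n)$ vertices. Every such instance is still induced $H$-free, and every triangle of $G$ survives in exactly one of them.

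\textbf{Step 2 (isolating a unique triangle).} On each $3$-colored instance $G'$, apply standard random sparsification by taking induced subgraphs. For each scale $i=0,\dots,\Order(\log n)$, keep every vertex of color class $j$ independently with probability $2^{-i/3}$, and repeat $\Order(\log n)$ times. If $G'$ has $t\ge 1$ triangles, then at the scale with $2^i\approx t$ some sample contains exactly one triangle with constant probability. These samples are again induced subgraphs, so they remain induced $H$-free. Instead of promising uniqueness, I would use the Unique-TD solver together with a check: any triangle it reports is verified in $\Order(1)$ time, and on non-unique inputs its answer is ignored. We must then still argue the $\Hplus$-freeness of the samples. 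To handle this, modify the $\Hplus$ argument so it needs only ``$\Hplus$ present $\Rightarrow$ many triangles.'' The fix is to \emph{also} reject a sample if it has more than one triangle, by filtering: we only trust the algorithm on samples we can argue are $\Hplus$-free. The cleaner route is to construct $\Hplus$ so that any copy of $\Hplus$ in an induced $H$-free graph yields at least two triangles. Then samples with $\le 1$ triangle are $\Hplus$-free, and the $T$-algorithm's output is meaningful on them. On other samples we may run it too and only accept verified triangles, since a false ``yes'' needs a witness. For this to work, the algorithm must output a witness; otherwise the black box is run with a time cutoff, and the standard search-to-decision reduction via halving recovers one. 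Amplifying the success probability by $\Order(\log n)$ repetitions gives failure probability $\Order(1/n)$, within the $\tOrder$. The total time is $\tOrder(n^{2.5}+n^{3/2}T(\sqrt n))$.

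\textbf{Step 3 (constructing $\Hplus$), the main obstacle.} We need a $3$-colorable $\Hplus$ with the same number of triangles as $H$ ($0$ or $1$), such that every copy of $\Hplus$ in an induced $H$-free graph forces at least two triangles. My attempt: fix a proper $3$-coloring $c$ of $H$ and let $\Hplus$ consist of several disjoint copies of $H$ glued appropriately. For every pair $u,v$ of non-adjacent vertices of $H$ with $c(u)\ne c(v)$, add a new vertex $w_{uv}$ adjacent to both, colored with the third color. If the copy of $H$ in $G$ turns the non-edge $uv$ into an edge, then $u v w_{uv}$ is a new triangle. Non-edges between same-colored vertices are harder, since adding such an edge would not form a triangle with a $3$-colorable gadget. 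There I would attach different copies of $H$ in which $u,v$ receive different colors (permuting colors across components), or attach a gadget path $u-a-b-v$ in which an extra edge $uv$ creates a triangle only together with another forced edge. Using several disjoint copies of $H$ and their augmentations, pigeonhole on which copy is non-induced yields at least two triangles. Either two copies are both non-induced, or a single copy plus $H$'s own triangle gives two.

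Checking that the added gadgets create no new triangles in $\Hplus$ itself and preserve $3$-colorability is routine, since each $w_{uv}$ has exactly two neighbors, of distinct colors. The delicate part is the same-color non-edges and ensuring the count reaches two even when $H$ is triangle-free.
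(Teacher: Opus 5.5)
Your pipeline (color-code into induced subgraphs, isolate by vertex sampling, then argue that a graph with at most one triangle cannot contain $\Hplus$) matches the paper's. However, Step~3, which you yourself flag as the main obstacle, is left unresolved, and the difficulty you see there is not real. You claim a non-edge $uv$ with $c(u)=c(v)$ cannot be handled by a wedge. But $\Hplus$ only has to be $3$-colorable as a graph; it need not extend $c$. A new vertex whose only neighbors are $u$ and $v$ can always get a color avoiding both, with two choices when $c(u)=c(v)$. Since $uv\notin E(H)$, such a vertex lies on no triangle of $\Hplus$.

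The missing construction is simply this: for \emph{every} non-edge $uv$ of $H$, add \emph{two} vertices $x_{uv},y_{uv}$, each adjacent exactly to $u$ and $v$. A copy of $H$ inside an induced $H$-free graph must realize some non-edge $uv$ as an edge. That edge closes the two distinct triangles $uvx_{uv}$ and $uvy_{uv}$, which also settles triangle-free $H$ with no need for several copies.

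The workarounds you propose do not work. Permuting colors never separates a monochromatic pair. A path $u-a-b-v$ plus the edge $uv$ closes a $4$-cycle, not a triangle. Disjoint copies of $H$ multiply the triangle count when $H$ has a triangle. (Once wedges sit on all non-edges, your pigeonhole variant would also be valid: use two disjoint single-wedge copies for triangle-free $H$, and one single-wedge copy when $H$ has a triangle, since the image of that triangle differs from $uvw_{uv}$ because $w_{uv}\notin V(H)$.)

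Step~2 also contains a false claim: sampling all three classes at the common rate $2^{-i/3}$ does not isolate a triangle with constant probability. Suppose the triangles are $abg_1,\dots,abg_t$, all sharing the edge $ab$. Then at rate $p$ the probability of exactly one surviving triangle is $p^3t(1-p)^{t-1}=\Order(1/t^2)$ for every $p$. Each color class must instead be sieved at its own independently guessed rate, roughly the inverse of the number of its vertices lying on surviving triangles, as in Lemma~\ref{lem:sieving}. This still outputs only induced subgraphs.

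Finally, the witness-verification and search-to-decision detour is unnecessary. If $G$ is triangle-free, every sample is triangle-free and therefore $\Hplus$-free, so the black box is reliable on all of them. If $G$ has a triangle, every YES answer is correct, and the unique-triangle sample supplies one with high probability.
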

\noindent Note that the $n^{3/2} \cdot T(\sqrt{n})$ term in the time complexity dominates the $n^{2.5}$ term whenever
$T(N) = \tilde\Omega(N^2)$ (which corresponds to linear time in the input size for dense graphs).

The definition of the augmented pattern $\Hplus$ is given below.
\begin{definition}
   Given a pattern $H$, the \emph{augmented pattern} $\Hplus$ is obtained by attaching two wedges (paths of length $2$)
   to the endpoints of every non-edge of $H$. Formally, $V(H) \subseteq V(\Hplus)$, $E(H) \subseteq E(\Hplus)$, and for
   every $u,v \in V(H)$ such that $uv \notin E(H)$, we add two distinct vertices $x_{uv}, y_{uv}$ to $V(\Hplus)$ and
   four edges $u x_{uv}, v x_{uv}, u y_{uv}, v y_{uv}$ to $E(\Hplus)$.
\end{definition}
Crucially, we observe that if $H$ is $3$-colorable and contains at most one triangle, then $\Hplus$ also satisfies these
properties. That is because the additional vertices $x_{uv}$ and $y_{uv}$ do not form a triangle by definition.
Moreover, since their only neighbors are two vertices in $H$, they can be assigned a color distinct from those of their
neighbors. Another crucial property of $\Hplus$ is that it contains $H$ as a subgraph. Consequently, if an induced
$H$-free graph contains $\Hplus$ as a subgraph, the copy of $H$ embedded within must contain at least one additional
edge corresponding to a non-edge of $H$. By construction, this non-edge closes two triangles in $\Hplus$. See
\cref{fig:P_5-F} for an example.

\begin{figure}[htbp]
   \centering
   \includegraphics[scale=1.2]{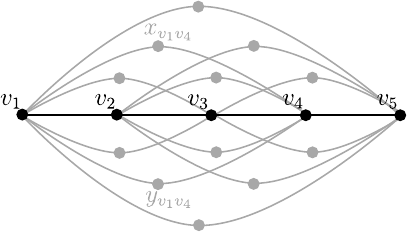}
   \caption{The augmented pattern $\Hplus$ corresponding to $P_5$. The original pattern $P_5$ is drawn in black, and the
   additional parts are drawn in gray. Observe that if an induced $P_5$-free graph contains $\Hplus$ as a subgraph, then
   the vertices corresponding to $P_5$ must be connected by at least one non-edge of $P_5$, e.g., the edge $v_1v_4$. In
   that case, the graph contains at least two triangles: $v_1v_4x_{v_1v_4}$ and $v_1v_4y_{v_1v_4}$.}
   \label{fig:P_5-F}
\end{figure}

The goal is, therefore, to show a self-reduction to the unique variant of Triangle Detection that preserves the induced
$H$-freeness property. Specifically, we prove the following lemma:
\begin{lemma}[Reduction to Unique Triangle Detection]
   \label{lem:reduction-to-unique}
   There is a randomized reduction that, given a graph $G$, either finds a triangle in $G$, or outputs $k =
   \tOrder(n^{3/2})$ induced subgraphs $G_1, \ldots, G_k$, such that:
   \begin{enumerate}
      \item If $G$ contains a triangle, then with probability $1 - \Order(1/n)$, there exists at least one $G_i$ that
         contains exactly one triangle.
      \item Each $G_i$ has $\Order(\sqrt{n})$ vertices.
   \end{enumerate}
   The reduction runs in $\tOrder(n^{2.5})$ time.
\end{lemma}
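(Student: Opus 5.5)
We will first apply Theorem~\ref{lem:color-coding} (color-coding induced $H$-free graphs) to $G$. In $\Order(n^{2.5})$ time it either finds a triangle, in which case we are done, or it outputs $\Order(n^{3/2})$ instances. Each instance is a $3$-colored induced subgraph $G'$ of $G$ on $\Order(\sqrt{n})$ vertices, and every triangle of $G$ appears in exactly one instance. We may therefore assume the input is $3$-colored with color classes $V_1,V_2,V_3$ of size $s=\Order(\sqrt n)$, and that every triangle is rainbow (one vertex per class). The remaining task is to isolate a single triangle in each such instance by further restricting to \emph{induced} subgraphs, namely by deleting vertices and never edges. This keeps the output graphs induced subgraphs of $G$ and keeps the vertex bound $\Order(\sqrt n)$.

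The isolation step uses the standard vertex-sampling technique of Valiant--Vazirani type. For each instance $G'$ and each triple of scales $(i_1,i_2,i_3)\in\{0,\dots,\lceil\log s\rceil\}^3$, we independently sample each vertex of $V_j$ with probability $2^{-i_j}$. We repeat this $\Order(\log n)$ times and output the induced subgraph on the sampled vertices. The number of outputs is $\Order(n^{3/2})\cdot \Order(\log^4 n)=\tOrder(n^{3/2})$. Writing each sample costs $\Order(s^2)=\Order(n)$ time, so the total is $\tOrder(n^{2.5})$.

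For correctness, fix the instance $G'$ containing a triangle $abc$ of $G$, with $a\in V_1$, $b\in V_2$, $c\in V_3$. Let $T$ be its triangle set. We need scales for which, with constant probability, exactly one triangle survives. We proceed coordinate by coordinate. Let $d_1$ be the number of distinct $V_1$-vertices appearing in triangles of $T$, and choose $p_1\approx 1/d_1$. With constant probability exactly one such vertex $a'$ survives. Conditioned on this, the surviving triangles all go through $a'$; they form a bipartite set of edges between $V_2$ and $V_3$ inside $N(a')$. Next, pick $p_2\approx 1/d_2$, where $d_2$ is the number of $V_2$-vertices in triangles through $a'$, so that exactly one $b'$ survives with constant probability. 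Finally, pick $p_3\approx 1/d_3$ for the $V_3$-vertices completing $a'b'$.

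The main subtlety is that these counts $d_2,d_3$ depend on earlier random outcomes. Because the scales are enumerated over all triples, we handle this by conditioning sequentially. The sampling in different color classes is independent, so conditioned on $a'$, for the scale $i_2$ closest to $\log d_2(a')$ the probability of a unique survivor in $V_2$ is a constant. The same holds for $V_3$. Summing over the choice of $a'$ gives a constant success probability for \emph{some} scale triple in each repetition. With $\Order(\log n)$ repetitions the failure probability drops to $\Order(1/n)$.

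The step I expect to be the main obstacle is making the unique-survivor estimate rigorous. The event ``exactly one of $d$ items survives at rate $\approx 1/d$'' has probability about $d p(1-p)^{d-1}\ge 1/(2e)$ when $p\in[1/(2d),1/d]$. This bound must be combined across the three adaptive stages. If that becomes messy, a cleaner alternative is to isolate the three coordinates in the order $V_1$, then $V_2$, then $V_3$, with the justification above at each stage.
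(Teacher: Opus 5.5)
Your proposal is correct and follows the paper's proof: apply the color-coding theorem, then run vertex-subsampling isolation on each $3$-colored instance. The paper cites this isolation lemma from prior work, remarking that edge subsampling must be replaced by vertex subsampling, whereas you re-derive it inline. For the adaptive-scale step you flagged, one way to make it rigorous is as follows: the expected number of scale triples yielding a unique triangle is at least $(2e)^{-3}$ (fix $i_1$ for $d_1$, sum over the surviving $a'$ and $b'$, and drop nonnegative terms), so since each triple is sampled independently, at least one triple succeeds with constant probability per repetition.
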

Note that since each $G_i$ is an induced subgraph of $G$, the property of induced $H$-freeness is preserved. Namely, if
$G$ is induced $H$-free, then so is $G_i$ for every $1 \leq i \leq k$. To prove this lemma, we use color coding followed
by an isolation step. We begin with our proof of the color-coding result: \colorcoding*
\begin{proof}[Proof of \cref{lem:color-coding}]
   The idea is to find a coloring of $G$ with $\Order( \sqrt{n} )$ colors, and then take every triplet of color classes
   as an instance. A paper by Adrian Dumitrescu~\cite{dumitrescu2025finding} presents an algorithm for finding either a
   triangle in $G$ or a $\Order(\sqrt{n})$-coloring in $\Order(m + n^{3/2}) = \Order(n^2)$ time~\cite[Theorem
   4]{dumitrescu2025finding}.~\footnote{The running time can be improved to $\Order(m+n)$ by a more careful
   implementation; see the $k=3$ case of \cref{lem:clique-or-coloring} for the details in a more general form.} If the
   algorithm does not find a triangle, we refine the provided coloring as follows. As long as some color class is larger
   than $2 \sqrt{n}$, we extract a block of size $\sqrt{n}$, and give it a new color. This results in a finer partition
   where each color class has size $\Order( \sqrt{n})$. Note that the refinement procedure adds at most $\Order(
   \sqrt{n} )$ additional colors because each block has size $\sqrt{n}$ and the blocks are disjoint.

   Now, we create $3$-colored Triangle Detection instances by taking the induced subgraph of every triplet of color
   classes. There are $\Order(n^{3/2})$ instances, and each instance has $\Order(\sqrt{n})$ vertices and $\Order(n)$
   edges. To implement this efficiently, we can sort the adjacency lists of $G$ according to the color classes, so that
   we can quickly extract the edges between any triplet of color classes. Then, it takes $\Order(n^{2.5})$ time to
   generate all the instances (which is linear in the output size).

   For every triangle $xyz$ in $G$, its vertices $x,y,z$ must belong to different color classes (two vertices cannot be
   in the same color class because an edge connects them). Since the color classes are disjoint, this specific triplet
   of color classes is unique. Thus, the triangle $xyz$ appears in exactly one output instance.
\end{proof}

The second tool is an isolation lemma that reduces the problem to instances with a unique solution by applying a
``random sieve'' to a colored instance. This technique dates back to the Valiant-Vazirani theorem~\cite{valiant1985np}
and is commonly used in various settings (e.g., the approximate counting
setting~\cite{goldreich2025coarse,dell2022approximately}). We use the following formulation
from~\cite{abboud2026triangle}:
\begin{lemma}[{\cite[Lemma 5.2]{abboud2026triangle}}]
   \label{lem:sieving}
   There is a near-linear time reduction that, given a $3$-colored graph $G$ with at most $n$ vertices in each color
   class, outputs $r = \tOrder(1)$ induced subgraphs of $G$, such that if $G$ contains a triangle, then with probability
   at least $1-\Order(1/n)$, one of the subgraphs contains a unique triangle.
\end{lemma}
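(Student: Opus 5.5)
The plan is to isolate a single triangle one color class at a time, by sampling each color class at its own rate. Let $V_1,V_2,V_3$ be the color classes of $G$, and let $L=\lceil\log_2 n\rceil+1$. For every triple of scales $(a,b,c)\in\{0,\ldots,L\}^3$ we sample three sets. The set $S_1^{(a)}\subseteq V_1$ keeps each vertex independently with probability $2^{-a}$. The set $S_2^{(a,b)}\subseteq V_2$ is a fresh sample at rate $2^{-b}$, and $S_3^{(a,b,c)}\subseteq V_3$ is a fresh sample at rate $2^{-c}$. We output the induced subgraph $G[S_1^{(a)}\cup S_2^{(a,b)}\cup S_3^{(a,b,c)}]$. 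This gives $O(\log^3 n)$ induced subgraphs, each computable in linear time.

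We then repeat the whole construction $\Theta(\log n)$ times independently, for $r=\tOrder(1)$ subgraphs in total. All of them are induced subgraphs of $G$, as the statement requires.

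For correctness, suppose $G$ has a triangle, and consider a single repetition. We use the following elementary fact: if a set $X$ has size $s\ge 1$ and each element is kept independently with probability $p\in[\frac{1}{2s},\frac{1}{s}]$, then exactly one element of $X$ survives with probability $sp(1-p)^{s-1}\ge \frac12(1-\frac1s)^{s-1}\ge \frac{1}{2e}$. We isolate the triangle in three stages.

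\begin{enumerate}
\item Let $A\subseteq V_1$ be the set of vertices lying on some triangle, and pick $a$ with $2^{-a}\in[\frac{1}{2|A|},\frac{1}{|A|}]$. With probability at least $\frac{1}{2e}$, exactly one vertex $x\in A$ lies in $S_1^{(a)}$. Then every triangle in the output subgraph passes through $x$.
\item Condition on $S_1^{(a)}$, which fixes $x$. Let $B\subseteq V_2$ be the set of vertices $y$ such that $xy$ lies on a triangle, and choose $b=b(x)$ for $|B|$ as in the first stage. The choice of $b$ depends on the random $x$. This is harmless because $S_2^{(a,b)}$ is sampled independently of $S_1^{(a)}$ for every $b$. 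Hence, conditionally, exactly one such $y$ survives with probability at least $\frac{1}{2e}$.
\item In the same way, let $C=N(x)\cap N(y)\cap V_3$ and choose $c=c(x,y)$ for $|C|$. Since $S_3^{(a,b,c)}$ is fresh, exactly one $z\in C$ survives with probability at least $\frac{1}{2e}$.
\end{enumerate}

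When all three events occur, $xyz$ is the unique triangle in the output subgraph. One repetition therefore succeeds with probability at least $(2e)^{-3}$, a constant. After $\Theta(\log n)$ independent repetitions, the failure probability is $O(1/n)$.

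The main obstacle is that triangles are correlated through shared vertices. For example, many triangles may share one edge. For this reason a single uniform sampling rate cannot isolate a triangle, and the per-class rates must be chosen adaptively, after seeing the survivors of the earlier stages. The point that needs care is exactly this adaptivity. We handle it by indexing the later samples by the earlier scale choices and drawing them afresh, so that conditioning on earlier outcomes never biases the later ones. We also need to check that each relevant set ($A$, $B$, $C$) has size at most $n$, so that some scale in $\{0,\ldots,L\}$ always fits.
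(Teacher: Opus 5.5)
Your proof is correct and follows the same route as the paper's sketch. You run an exponential search over per-class sampling rates $2^{-i}$, isolate in turn the $V_1$-, $V_2$- and $V_3$-vertex of a triangle with constant probability each, and finish with $\Theta(\log n)$ independent repetitions; this relies on the coloring being proper, so that every triangle meets all three classes. Indexing the later samples by the earlier scales is harmless but not needed: sampling each class independently at every rate and taking all $O(\log^3 n)$ combinations, as in the paper's $k$-clique version, already gives the conditional independence your adaptive argument uses.
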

The high-level idea is to guess (via exponential search) the approximate number of vertices participating in a triangle
within some color class, and then appropriately subsample. For example, if there are $t$ red vertices that participate
in a triangle, subsampling the red vertices with probability $\Theta(1/t)$ results in a graph that, with constant
probability, has only one red vertex participating in a triangle. Repeating this for every color class gives the
result.\footnote{We remark here that~\cite{abboud2026triangle} replaced two subsampling phases with one phase where they
subsample the edges. This is not suitable for induced $H$-free graphs, but as we explained, it can be replaced with
subsampling vertices at the cost of another $\Order( \log n )$ factor in the running time. In fact, this is the standard
way to prove such a result.} It is crucial here that the input graph is properly $3$-colored. If instead we merely
partitioned $V(G)$ into three parts that need not be independent (e.g., by a random partition), then the isolation lemma
would only ensure a unique \emph{colorful} triangle in one of the instances. Other triangles using a monochromatic edge
could still remain, so the resulting instance would not necessarily be a Unique Triangle Detection instance.

\begin{proof}[Proof of \cref{lem:reduction-to-unique}]
   Given the graph $G$, we first apply the color-coding reduction from \cref{lem:color-coding} to generate $\ell =
   \Order(n^{3/2})$ induced subgraphs $G_1, \ldots, G_{\ell}$, where each $G_i$ is $3$-colored on $\Order(\sqrt{n})$
   vertices. For each $G_i$, we apply the reduction from~\cref{lem:sieving} to generate $r = \tOrder(1)$ induced
   subgraphs $G_{i,1}, \ldots, G_{i,r}$. The collection of all $G_{i,j}$ constitutes the output.

   If $G$ is triangle-free, then all subgraphs are triangle-free. If $G$ has a triangle, \cref{lem:color-coding} ensures
   it appears in some $G_i$. Conditioned on this, \cref{lem:sieving} ensures that with high probability, one of the
   $G_{i,j}$ contains a unique triangle. The generation of the instances takes $\tOrder(n^{2.5})$ time, hence the
   overall time bound of the reduction is $\tOrder(n^{2.5})$.
\end{proof}

We are now ready to prove the main theorem in this section.

\begin{proof}[Proof of \cref{thm:induced-to-subgraph}]
   Assume there exists a Triangle Detection algorithm $\mathcal{A}$ for $\Hplus$-free graphs running in time at most
   $T(N)$ on $N$-vertex graphs and succeeds with probability at least $2/3$. Given an induced $H$-free graph $G$, we
   apply the reduction from \cref{lem:reduction-to-unique} to generate instances $G_1, \ldots, G_k$ where $k =
   \tOrder(n^{3/2})$ and each $G_i$ is an induced subgraph of $G$. The promise is that if $G$ contains a triangle, then
   with probability $1 - \Order(1/n)$, at least one $G_i$ contains a unique triangle. We run $\mathcal{A}$ on every
   instance $G_i$, and we amplify the success probability of $\mathcal{A}$ to be $1-\Order(1/n^3)$. Standard
   amplification is done by running $\Theta( \log n )$ independent executions of $\mathcal{A}$ on $G_i$ and taking the
   majority vote. Since there are only $\tOrder(n^{3/2})$ oracle calls, the total failure probability of all oracle
   calls is $\tOrder(n^{3/2}/n^3)=\Order(1/n)$. If the majority of executions return YES, we return YES for that
   instance; otherwise, we return NO.

   \paragraph*{Correctness.}
   We analyze the two cases depending on whether $G$ contains a triangle.
   \begin{itemize}
      \item \textbf{Case 1: $G$ is triangle-free (NO-instance).}
         Since every $G_i$ is an induced subgraph of $G$, it is also induced $H$-free and triangle-free. For our
         algorithm to err, $\mathcal{A}$ must output YES on some $G_i$. This can happen only if:
         \begin{enumerate}
         \item The randomized algorithm $\mathcal{A}$ fails. This happens with probability $\Order(1/n^3)$, for a total
            error probability of $\tOrder(n^{3/2}/n^3) = \Order(1/n)$.
         \item The input $G_i$ is not $\Hplus$-free, so the behavior of $\mathcal{A}$ on $G_i$ is indeterminate.
      \end{enumerate}
         We show that item 2 is impossible. Suppose, for contradiction, that $G_i$ contains a copy of $\Hplus$. Since
         $G_i$ is induced $H$-free, and $\Hplus$ contains $H$ as a subgraph, this (non-induced) copy of $H$ must induce
         an edge $uv$ such that $uv \notin E(H)$. By the definition of the augmented pattern $\Hplus$, the vertices $u,
         v$ are connected to a specific vertex $x_{uv}$ in $\Hplus$. Consequently, the presence of edge $uv$ implies
         that $\{u, v, x_{uv}\}$ forms a triangle in $G_i$. This contradicts the fact that $G_i$ is triangle-free. Thus,
         all inputs are valid $\Hplus$-free graphs, and the algorithm returns NO with probability $1-\Order(1/n)$.

      \item \textbf{Case 2: $G$ contains a triangle (YES-instance).}
         By \cref{lem:reduction-to-unique}, with probability $1-\Order(1/n)$, there exists an index $i$ such that $G_i$
         contains a unique triangle. We claim that such a graph $G_i$ is $\Hplus$-free. Suppose, for contradiction, that
         $G_i$ contains a copy of $\Hplus$. As in Case 1, since $G_i$ is induced $H$-free, this copy must use a non-edge
         $uv$ of $H$ as an edge in $G_i$. By the construction of $\Hplus$, the endpoints $u, v$ are connected to
         \emph{two} distinct vertices $x_{uv}$ and $y_{uv}$ in $\Hplus$. Therefore, the edge $uv$ closes two distinct
         triangles: $uvx_{uv}$ and $uvy_{uv}$. This contradicts the property that $G_i$ contains exactly one triangle.
         Thus, the instance $G_i$ is valid ($\Hplus$-free) and contains a triangle. The algorithm $\mathcal{A}$ will
         return YES with probability $1-\Order(1/n^3)$.
   \end{itemize}

   \paragraph*{Complexity analysis.}
   The total time is the reduction time plus the oracle calls: \[ \underbrace{\tOrder(n^{2.5})}_{\text{Reduction}} +
   \underbrace{\tOrder(n^{3/2})}_{\text{\# instances}} \times \underbrace{T(\Order(\sqrt{n}))}_{\text{Oracle time}}. \]
\end{proof}

%% file: sections/colored.tex
In this section, we prove the following theorem:
\begin{restatable}{theorem}{coloredtosubgraph}
   \label{thm:colored-to-subgraph}
   For every $3$-colored pattern $H$ that contains at most one triangle, there exists a $3$-colorable pattern $\Hstar$
   containing the same number of triangles as $H$, such that every colored $H$-free graph $G$ (with respect to the
   coloring of $H$) is also $\Hstar$-free.
\end{restatable}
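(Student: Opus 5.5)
The plan is to build $\Hstar$ from several copies of $H$, one for each permutation of the three colors, and to tie all of them to a single rigid \emph{frame}. The frame forces every proper $3$-coloring of $\Hstar$ to be one fixed reference coloring up to a global permutation $\sigma$ of $\Colors$. Then, whatever $3$-coloring a host graph $G$ induces on a copy of $\Hstar$, the copy of $H$ indexed by $\pi=\sigma^{-1}$ receives exactly the coloring $c_H$. This is a colored copy of $H$ in $G$, which contradicts colored $H$-freeness. The two main difficulties are finding a frame that creates no new triangles, and keeping the triangle count at that of $H$ rather than six times it.

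\textbf{Step 1: the frame.} I will use a triangle-free, uniquely $3$-colorable graph $U$ with large color classes and large girth. Such graphs exist by the classical results on uniquely $k$-colorable graphs of large girth (Ne\v{s}et\v{r}il; M\"uller). Fix its unique coloring $c_U$ up to permutation. For a vertex $v$ that should receive reference color $j$, I attach $v$ to two \emph{private} vertices $u,u'\in V(U)$ whose colors are the two colors other than $j$. Here $u$ and $u'$ are chosen nonadjacent in $U$, and all private vertices used across the whole construction are distinct. Large girth and large color classes make such a choice possible.

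Under any proper $3$-coloring of $\Hstar$, the restriction to $U$ equals $\sigma\circ c_U$ for some permutation $\sigma$. Consequently each attached $v$ gets color $\sigma(j)$. For triangle-freeness of the attachments, consider any triangle through a private vertex $u$. Its only neighbor outside $U$ is $v$, so the triangle would have to be $\{v,u,w\}$ with $w\in N(u)\cap N(v)$. Now $N(v)\cap V(U)$ consists only of $v$'s own private vertices, which are pairwise nonadjacent, and $v$'s $H$-neighbors are not adjacent to $u$. Hence no such triangle exists.

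\textbf{Step 2: the copies of $H$.} For each permutation $\pi$ of $\Colors$, I take a copy $H_\pi$ of $H$ and give each vertex $x$ of $H_\pi$ the reference color $\pi(c_H(x))$, enforced by the attachments of Step 1. This reference coloring is proper on $H_\pi$, so $\Hstar$ is $3$-colorable: color $U$ by $c_U$, each $H_\pi$ by its reference colors, and note that private vertices avoid the colors of their attached vertices by construction.

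If $H$ is triangle-free, the six copies are vertex-disjoint, and by Step 1 the graph $\Hstar$ is triangle-free. If $H$ has exactly one triangle $T=\{a,b,c\}$, I instead let all six copies share three vertices $t_1,t_2,t_3$, given reference colors $1,2,3$. Copy $H_\pi$ maps each $x\in T$ to $t_{\pi(c_H(x))}$; this is consistent because $T$ is colorful under $c_H$. All other vertices of the copies are disjoint.

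I then need to check that the only triangle is $t_1t_2t_3$. Non-triangle vertices of distinct copies are nonadjacent. Edges between a copy's private part and $\{t_1,t_2,t_3\}$ are exactly $H$-edges of that copy. So any triangle lies inside a single $H_\pi$ together with the frame, and within $H_\pi$ the only triangle is the image of $T$.

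\textbf{Step 3: conclusion.} Suppose $G$ is $3$-colored and contains a copy $\phi$ of $\Hstar$. Then $c_G\circ\phi$ is a proper $3$-coloring of $\Hstar$, so it equals $\sigma\circ(\text{reference})$ for some $\sigma$. On $H_{\sigma^{-1}}$ it gives $\sigma\sigma^{-1}c_H=c_H$. Therefore $\phi$ restricted to $H_{\sigma^{-1}}$ is an injective, color-preserving homomorphism from $H$, i.e., a colored copy of $H$. This means every colored $H$-free $G$ is $\Hstar$-free.

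I expect the main obstacle to be Step 1: certifying the existence of a uniquely $3$-colorable triangle-free frame with enough room for pairwise nonadjacent private attachment vertices. If citing the large-girth results proves awkward, I would fall back on an explicit gadget. One option is to replace each attachment edge by a short forcing chain built from copies of a small uniquely $3$-colorable triangle-free graph, and then verify triangle-freeness directly.
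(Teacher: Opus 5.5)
Your proof is correct, but it takes a different route from the paper.

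\textbf{The paper's route.} The paper builds all its rigidity explicitly from the Gr\"otzsch graph. Deleting an edge gives an equality gadget, and adding a pendant vertex gives an inequality gadget. Three inequality gadgets yield a triangle-free core $u,v,w$ that is forced to be colorful.
\begin{itemize}
\item For triangle-free $H$, the gadget $X$ forces three independent pairs to receive the three distinct color pairs. Each pair is then joined \emph{completely} to a whole color class of a copy $H_\phi$. No private vertices are needed, because both the pair and the color class are independent sets.
\item For $H$ with a triangle, the paper drops $X$. It uses the pattern's own shared triangle as the color-forcing core, with equality gadgets tying each color class of $H'_\phi$ to a core vertex.
\end{itemize}

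\textbf{Your route.} You import a single uniquely $3$-colorable triangle-free frame from the literature. You force each vertex individually through two private, nonadjacent frame neighbors. This is the same ``adjacent to two colors forces the third'' mechanism the paper uses for $x_1,\dots,z_2$ and for the color classes of $H_\phi$. Privacy is exactly what keeps your attachments triangle-free, since adjacent pattern vertices never share a frame neighbor.

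\textbf{Trade-offs.} Your approach gives a uniform treatment of both cases: you simply glue the six copies along $t_1t_2t_3$ and keep the frame. It also makes the final step ($\pi=\sigma^{-1}$) very transparent. The cost is reliance on an external existence theorem rather than a self-contained gadget of bounded, explicit size.

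\textbf{One step to fill in.} You assert without proof that suitable private pairs exist, but this is easy. Let $N$ be the number of attached vertices. If $U$ has girth at least $5$ and each color class has at least $2N+2$ vertices, then two sets of at least two unused vertices in distinct classes cannot be completely joined, since that would create a $C_4$. So greedy selection always finds the next nonadjacent pair. Alternatively, you can avoid any girth requirement. Grow an arbitrary uniquely $3$-colorable triangle-free graph by adding fresh vertices, each adjacent to two nonadjacent vertices of the other two colors. This preserves unique colorability and triangle-freeness, and the fresh vertices are pairwise nonadjacent.
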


We utilize an \emph{equality gadget} $EQ_{u,v}$ and an \emph{inequality gadget} $NEQ_{u,v}$, defined as follows. The
equality gadget is a $3$-colorable, triangle-free graph that, given two vertices $u,v$, forces $c(u) = c(v)$ in every
$3$-coloring $c$. We also generalize the gadget to sets $S$: $EQ_{u,S}$ forces $c(u) = c(v)$ for every $v \in S$, where
$S$ is an independent set. The requirements for the inequality gadget $NEQ_{u,v}$ are identical, except that it forces
$c(u) \neq c(v)$. The gadgets are derived from the Gr\"otzsch graph.

\begin{definition}
   The Gr\"otzsch graph $\mathcal{G}$ (named after Herbert Gr\"otzsch) is the smallest $4$-chromatic and triangle-free
   graph. It is also a $4$-critical graph, meaning that it becomes $3$-colorable upon the removal of any single edge.
   See \cref{fig:Grotzsch}.
\end{definition}

\begin{figure}[htbp]
   \centering
   \includegraphics[scale=1.2]{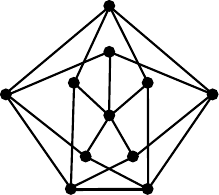}
   \caption{The Gr\"otzsch graph.}
   \label{fig:Grotzsch}
\end{figure}

\paragraph*{The equality gadgets $EQ_{u,v}$ and $EQ_{u,S}$.}
We construct $EQ_{u,v}$ by removing an edge $pq \in E(\mathcal{G})$ from the Gr\"otzsch graph and identifying $u$ and
$v$ with $p$ and $q$, respectively. Observe that $EQ_{u,v}$ is $3$-colorable and triangle-free. Moreover, the terminals
$p$ and $q$ must receive the same color in every $3$-coloring; otherwise, we could restore the edge $pq$ to obtain a
$3$-coloring of $\mathcal{G}$, violating the $4$-chromatic property of $\mathcal{G}$. $EQ_{u,S}$ is obtained by taking
$|S|$ independent copies of the basic gadget: $EQ_{u,v}$ for every $v \in S$.\footnote{This gadget can be made more
efficient in terms of size, but for the sake of brevity, we do not optimize it here.}

\paragraph*{The inequality gadget $NEQ_{u,v}$.}
The construction of $NEQ_{u,v}$ is similar, but we introduce a pendant vertex $w$ adjacent to $q$ (i.e., $w$ is adjacent
only to $q$). We identify $u$ and $v$ with $w$ and $p$ (instead of $q$ and $p$). This graph is $3$-colorable and
triangle-free (as adding pendant vertices does not affect colorability or triangle count). It forces $c(u) \neq c(v)$
because the edge $qw$ forces $c(q) \neq c(w)$, and since $c(q) = c(p)$ (from the underlying equality structure), it
follows that $c(w) \neq c(p)$.

Another gadget that we use in the proof is the following.

\begin{lemma}
   \label{lem:gadget}
   There exists a graph $X$ that is $3$-colorable and triangle-free, contains $3$ pairs of vertices $\set{x_1, x_2}$,
   $\set{y_1, y_2}$, and $\set{z_1, z_2}$, and satisfies the following properties:
   \begin{enumerate}
      \item The set $S = \set{x_1, x_2, y_1, y_2, z_1, z_2}$ is an independent set.
      \item In every $3$-coloring of $X$, each pair is assigned a distinct unordered pair of distinct colors.
   \end{enumerate}
\end{lemma}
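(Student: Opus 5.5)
We will build $X$ by gluing together copies of the inequality gadget $NEQ_{u,v}$ defined above, using three auxiliary \emph{anchor} vertices $a,b,c$. Concretely, $X$ consists of the six vertices of $S$, the anchors $a,b,c$, and the following gadgets, each attached only at its two terminals:
\begin{itemize}
   \item $NEQ_{a,b}$, $NEQ_{b,c}$ and $NEQ_{a,c}$;
   \item for the $x$-pair, $NEQ_{x_1,x_2}$, $NEQ_{x_1,a}$ and $NEQ_{x_2,a}$;
   \item the same three gadgets for the $y$-pair with anchor $b$, and for the $z$-pair with anchor $c$.
\end{itemize}
All gadget copies have disjoint interiors. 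No edges other than those inside gadgets are added.

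\textbf{Forcing (property 2).} Fix any $3$-coloring of $X$. The three gadgets on the anchors force $c(a),c(b),c(c)$ to be pairwise distinct, so they are $1,2,3$ in some order. For the $x$-pair, the gadgets force $c(x_1)\neq c(x_2)$ and also $c(x_1),c(x_2)\neq c(a)$. With only three colors available, this gives $\{c(x_1),c(x_2)\}=\{1,2,3\}\setminus\{c(a)\}$. The same holds for the $y$-pair with $b$ and the $z$-pair with $c$. Since the anchors carry distinct colors, the three complements are distinct. Hence each pair receives a distinct unordered pair of distinct colors.

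\textbf{Independence of $S$ and triangle-freeness.} In $NEQ_{u,v}$ the terminals are $w$ and $p$. These are nonadjacent, because $w$ is adjacent only to $q$ and $w\neq p$. So every terminal pair is nonadjacent inside its gadget. Since no other edges are added, $S$ is independent (property 1).

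For triangle-freeness, consider any triangle in $X$. It must contain a vertex $t$ that is not a terminal, because all terminals are pairwise nonadjacent. Such a $t$ lies in the interior of a unique gadget, and all neighbors of $t$ lie in that gadget. Therefore the whole triangle lies inside one gadget. This is impossible, since each gadget is triangle-free (the Grötzsch graph is triangle-free, and a pendant vertex cannot create a triangle).

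\textbf{$3$-colorability.} First color the anchors $1,2,3$, and give each pair the two colors different from its anchor's color, in either order. Every gadget now has terminals carrying two distinct prescribed colors. It remains to check that $NEQ_{u,v}$ admits a $3$-coloring realizing \emph{any} ordered pair of distinct terminal colors.
\begin{itemize}
   \item By $4$-criticality, $\mathcal{G}$ minus the edge $pq$ has a $3$-coloring, which necessarily has $c(p)=c(q)$.
   \item Permuting colors, we may make $c(p)=c(q)$ any prescribed color.
   \item The pendant vertex $w$ can then receive either of the two remaining colors.
\end{itemize}
Since gadgets intersect only in terminals, these colorings combine into a proper $3$-coloring of $X$.

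The only delicate point is the gluing argument, namely that attaching gadgets at nonadjacent terminals preserves both triangle-freeness and $3$-colorability. The triangle-freeness part is the local-neighborhood argument given above. The colorability part relies on the "any pair of distinct colors" flexibility of $NEQ$, which follows from color symmetry.
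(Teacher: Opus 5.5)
Your proof is correct. It shares the paper's core, namely three pairwise $NEQ$-linked vertices that are forced to use all three colors, but it attaches the pairs differently.

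The paper joins each vertex of $S$ by ordinary edges to two of the core vertices $u,v,w$, which pins it to the color of the third. It then arranges the pairs to receive $\{c(u),c(v)\}$, $\{c(v),c(w)\}$ and $\{c(u),c(w)\}$. This costs only three gadget copies plus twelve edges. It forces more than required, since each vertex of $S$ gets an individually determined color. It also makes it immediate that the new edges close no triangles, because every vertex of $S$ has its neighborhood inside the independent core.

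You instead use twelve $NEQ$ copies, turning each $\{x_1,x_2,a\}$ (and likewise the triples with $b$ and $c$) into another $NEQ$-linked triple. A pair is then forced only as an unordered set, namely the complement of its anchor's color. In exchange, everything is built from a single gadget type. One uniform gluing argument settles triangle-freeness: terminals are pairwise nonadjacent, and interior vertices have all their neighbors inside their own gadget. The observation that $NEQ_{u,v}$ realizes every ordered pair of distinct terminal colors settles $3$-colorability.

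The paper uses these same two gluing facts tacitly for its own core and never checks $3$-colorability of $X$ explicitly, so your write-up is more complete on that point.
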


For example, a $3$-coloring of $X$ might assign colors such that $\set{x_1, x_2}$ are red and blue, $\set{y_1, y_2}$ are
red and green, and $\set{z_1, z_2}$ are blue and green.

\begin{proof}[Proof of \cref{lem:gadget}]
   We construct $X$ as follows:

   \paragraph*{The core.}
   We create a ``core'' consisting of three independent vertices $u, v, w$ such that in every $3$-coloring, these
   vertices are assigned three distinct colors. We connect them pairwise using inequality gadgets: $NEQ_{u, v}, NEQ_{v,
   w}, \text{ and } NEQ_{w, u}$. These gadgets force $c(u) \neq c(v)$, $c(v) \neq c(w)$, and $c(w) \neq c(u)$. Since
   there are exactly three colors available, the set $\set{u, v, w}$ must be colored using all three colors in
   $\Colors$. Note that $\set{u, v, w}$ remains an independent set.

   \paragraph*{Connecting the pairs.}
   Let $S = \set{x_1, x_2, y_1, y_2, z_1, z_2}$ be a set of $6$ independent vertices. We connect elements of $S$ to the
   core vertices $u, v, w$ using standard edges. Since $c(u), c(v), c(w)$ are distinct, if a vertex $z \in S$ is
   connected to two vertices in $\set{u, v, w}$ (e.g., to $u$ and $v$), it must receive the color of the third (e.g.,
   $c(w)$). We add the following edges:

   \begin{description}
      \item[Pair 1:] Target colors for $(x_1, x_2)$ are $(c(u), c(v))$.
         \begin{itemize}
         \item Connect $x_1$ to $v$ and $w \implies c(x_1) = c(u)$.
         \item Connect $x_2$ to $u$ and $w \implies c(x_2) = c(v)$.
      \end{itemize}

      \item[Pair 2:] Target colors for $(y_1, y_2)$ are $(c(v), c(w))$.
         \begin{itemize}
         \item Connect $y_1$ to $u$ and $w \implies c(y_1) = c(v)$.
         \item Connect $y_2$ to $u$ and $v \implies c(y_2) = c(w)$.
      \end{itemize}

      \item[Pair 3:] Target colors for $(z_1, z_2)$ are $(c(u), c(w))$.
         \begin{itemize}
         \item Connect $z_1$ to $w$ and $v \implies c(z_1) = c(u)$.
         \item Connect $z_2$ to $u$ and $v \implies c(z_2) = c(w)$.
      \end{itemize}
   \end{description}

   \begin{figure}[htbp]
      \centering
      \includegraphics[scale=1]{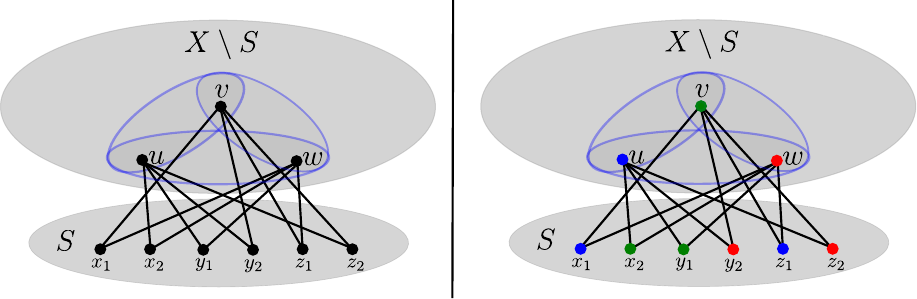}
      \caption{Left: The gadget graph $X$.
      The upper gray ellipse represents the core of the gadget, $X \setminus S$, containing the three uniquely colored
      vertices $u, v, w$ and the three inequality gadgets (blue ellipses). The lower gray ellipse represents the set $S$
      containing the three pairs $\set{x_1, x_2}, \set{y_1, y_2}, \set{z_1, z_2}$. \\ Right: The graph $X$ under a
      specific $3$-coloring $c$. Vertices $u, v, w$ receive distinct colors, which forces a unique coloring on $S$ such
      that each pair in $S$ is assigned a distinct pair of colors.}
      \label{fig:gadget_X}
   \end{figure}

   \paragraph*{Correctness.}
   \begin{itemize}
      \item \textbf{Triangle-Free:} The core uses triangle-free $NEQ$ gadgets.
         The new edges connect $S$ to $\set{u, v, w}$. Since both $S$ and $\set{u, v, w}$ are independent sets, no
         triangles are formed.
      \item \textbf{Distinct Pairs:} The pairs receive color sets $\set{c(u), c(v)}$, $\set{c(v), c(w)}$, and
         $\set{c(u), c(w)}$. These correspond to the three possible distinct unordered pairs of colors.
   \end{itemize}
\end{proof}

Let us now present the proof of the main theorem in this section.

\begin{proof}[Proof of \cref{thm:colored-to-subgraph}]
   Our construction of $\Hstar$ differs depending on whether $H$ contains a triangle. We begin with the case where $H$
   is triangle-free.

   \paragraph*{Construction for triangle-free $H$.}
   Denote by $c_H$ the coloring of $H$ and let the color classes of $H$ be $A, B, C$. The pattern $\Hstar$ is
   constructed by taking the gadget $X$ and attaching $6$ disjoint copies of $H$, one for each possible bijection
   between the pairs in $X$ to the color classes of $H$. For instance, consider a bijection $\phi$ defined by:
   $\set{x_1, x_2} \to B$, $\set{y_1, y_2} \to A$, and $\set{z_1, z_2} \to C$. For this assignment, we introduce an
   independent copy of $H$ denoted by $H_{\phi}$ and add all possible edges between the vertices of the pairs in $X$ and
   their assigned color classes in $H_{\phi}$. Specifically, we add all edges between $\set{x_1, x_2}$ and $B_{\phi}$,
   between $\set{y_1, y_2}$ and $A_{\phi}$, and between $\set{z_1, z_2}$ and $C_{\phi}$, where $A_{\phi}, B_{\phi},
   C_{\phi}$ are the copies of $A, B, C$ in $H_{\phi}$, respectively. We do so for every bijection $\phi$ ($6$
   bijections overall). See \cref{fig:h_star_full_example_pattern} for an example.

   First, observe that $\Hstar$ is triangle-free. The gadget $X$ is triangle-free by \cref{lem:gadget}, and every copy
   $H_{\phi}$ is triangle-free by assumption. Moreover, the only new edges are between a pair in the independent set $S
   \subseteq V(X)$ and a single color class of some copy $H_{\phi}$. Thus, any triangle using a new edge would have to
   contain either two vertices from $S$ or two vertices from the same color class of $H_{\phi}$; both are impossible,
   since $S$ and each color class of $H$ are independent.

   We now prove that if $G$ is colored $H$-free, then it is $\Hstar$-free. Suppose for contradiction that $G$ contains
   $\Hstar$ as a subgraph. The coloring of $G$ induces a $3$-coloring on the copy of $\Hstar$, and consequently on the
   copy of $X$ embedded within it. By \cref{lem:gadget}, each pair of vertices in $X$ is assigned a different pair of
   distinct colors.

   \begin{figure}[htbp]
      \centering
      \includegraphics[scale=1]{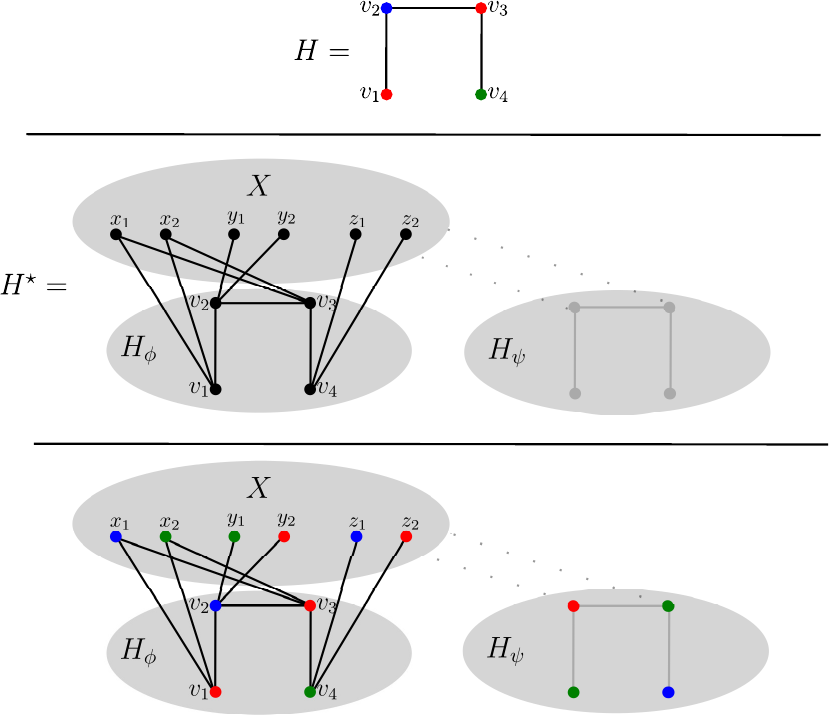}
      \caption{Top: The colored pattern $H$ given as input.
      Center: The structure of $\Hstar$, focusing on the connection between the gadget $X$ and a specific copy $H_\phi$.
      The connections enforce a specific coloring on $H_\phi$ based on the coloring of $X$. Bottom: The pattern $\Hstar$
      under a specific coloring $c$.}
      \label{fig:h_star_full_example_pattern}
   \end{figure}

   Without loss of generality, assume the coloring of $H$ assigns $\CRED$ to $A$, $\CBLUE$ to $B$, and $\CGREEN$ to $C$.
   Furthermore, assume without loss of generality that in the induced coloring of $X \subseteq G$:
   \begin{itemize}
      \item $\set{x_1, x_2}$ uses colors $\{\text{\CRED, \CBLUE}\}$,
      \item $\set{y_1, y_2}$ uses colors $\{\text{\CRED, \CGREEN}\}$,
      \item $\set{z_1, z_2}$ uses colors $\{\text{\CBLUE, \CGREEN}\}$.
   \end{itemize}

   Consider the copy of $H$ in $\Hstar$ corresponding to the assignment $\phi$ mapping $\set{z_1, z_2} \to A$,
   $\set{y_1, y_2} \to B$, and $\set{x_1, x_2} \to C$. Since $z_1$ and $z_2$ are colored with \CBLUE and \CGREEN, and
   they are fully connected to the vertices of $A_{\phi}$, the vertices of $A_{\phi}$ must be colored $\CRED$.
   Similarly, the vertices of $B_{\phi}$ must be $\CBLUE$ (as they are connected to the $\CRED$/$\CGREEN$ pair
   $\set{y_1, y_2}$), and the vertices of $C_{\phi}$ must be $\CGREEN$ (as they are connected to the \CRED/\CBLUE pair
   $\set{x_1, x_2}$). This implies the existence of a copy of $H$ in $G$ where $A$ is \CRED, $B$ is \CBLUE, and $C$ is
   \CGREEN. This is exactly the forbidden colored copy of $H$, contradicting the assumption that $G$ is colored
   $H$-free.

   \paragraph*{Construction for \texorpdfstring{$H$}{H} that contains a triangle.}
   The previous construction fails if $H$ contains a triangle, because creating $\Hstar$ by simply attaching copies of
   $H$ would result in $\Hstar$ containing multiple triangles. To avoid this, the new pattern $\Hstar$ is constructed so
   that all embedded copies of $H$ share a single, common triangle.

   Let $x, y, z$ be the vertices of the unique triangle in $H$, and let $H' = H - \set{x, y, z}$, the graph obtained by
   removing the triangle vertices from $H$. Let $c_H$ be the fixed coloring of $H$. The graph $\Hstar$ is constructed as
   follows:
   \begin{enumerate}
      \item \textbf{The core triangle:} Start with a central triangle consisting of three vertices $u, v, w$.
      \item \textbf{The $H$-candidates:} Add $6$ disjoint copies of $H'$. Each copy corresponds to a bijection $\phi
         \colon \set{u, v, w} \to \set{x, y, z}$. We interpret $\phi(u)$ as the role assigned to $u$ (e.g., if
         $\phi(u)=x$, then $u$ acts as $x$). Let $H'_{\phi}$ denote the copy associated with $\phi$.
      \item \textbf{Core-$H'_{\phi}$ connections:} For each bijection $\phi$, we connect the core $u, v, w$ to
         $H'_{\phi}$ as follows:
         \begin{itemize}
         \item \emph{Equality gadgets:} Let $S_{\phi(u)}$ be the set of vertices in $H'_{\phi}$ that corresponds to
            the color class of $\phi(u)$ in $H'$ (i.e., vertices $p \in H'$ with $c_H(p) = c_H(\phi(u))$). We add an
            equality gadget $EQ_{u, S_{\phi(u)}}$. We do the same for $v$ and $w$: add $EQ_{v, S_{\phi(v)}}$ and $EQ_{w,
            S_{\phi(w)}}$.
         \item \emph{Core-$H'_{\phi}$ edges:} We add edges between the core vertices and $H'_{\phi}$ to replicate the
            structure of $H$. Specifically, for every vertex $p'$ in $H'_{\phi}$ corresponding to $p \in H'$:
            \begin{itemize}
            \item If $p$ is a neighbor of $\phi(u)$ in $H$, add the edge $p'u$.
            \item If $p$ is a neighbor of $\phi(v)$ in $H$, add the edge $p'v$.
            \item If $p$ is a neighbor of $\phi(w)$ in $H$, add the edge $p'w$.
         \end{itemize}
      \end{itemize}
   \end{enumerate}
   Crucially, observe that in the coloring $c_H$ of $H$, a vertex is never adjacent to vertices in its own color class.
   Therefore, there are no edges between $u$ and $S_{\phi(u)}$ (and similarly for $v, w$).

   \paragraph*{Correctness.}
   Let us start by proving that $\Hstar$ is valid.
   \begin{itemize}
      \item \textbf{Unique triangle:} $\Hstar$ contains exactly one triangle, $uvw$.
         To see this, consider any $\phi$. The subgraph induced by $\set{u, v, w} \cup V(H'_{\phi})$, ignoring the
         equality gadgets, is isomorphic to $H$ (which has a unique triangle $xyz$) by construction. Now, consider the
         equality gadgets. Each gadget is triangle-free. Furthermore, we never add edges between the endpoints of a
         gadget; for example, there are no edges between $u$ and $S_{\phi(u)}$. Therefore, the gadgets do not close any
         new triangles with the rest of the graph.

      \item \textbf{3-colorability:} The core triangle $uvw$ is $3$-colorable. Fix any coloring of $u, v, w$.
         For every $\phi$, we can extend this coloring to $H'_{\phi}$ as follows: assign all vertices in $S_{\phi(u)}$
         the color of $u$, all vertices in $S_{\phi(v)}$ the color of $v$, and all vertices in $S_{\phi(w)}$ the color
         of $w$. This is a valid coloring for $H'_{\phi}$ because the sets $S_{\phi(\cdot)}$ are independent sets in
         $H'$ (they are color classes), and there are no edges between them that violate the coloring (edges in $H'$
         only connect different color classes). Finally, we must verify the connections to the core. Edges connect $u$
         only to neighbors of $\phi(u)$, which have different colors from $\phi(u)$ (and thus different from $u$). The
         equality gadgets connect $u$ to $S_{\phi(u)}$; since both endpoints receive the same color, and the equality
         gadget admits a $3$-coloring where terminals have the same color, the gadgets can be properly colored.
   \end{itemize}

   We now prove that if $G$ contains $\Hstar$ as a subgraph, then $G$ is not colored $H$-free. Suppose $G$ contains a
   copy of $\Hstar$. The coloring $c_G$ of $G$ induces a coloring on the core triangle $uvw$. Since they form a
   triangle, $u, v, w$ receive distinct colors.

   \paragraph*{Example.} Suppose that in $H$, the vertices $x, y, z$ are colored $c_H(x) = \CRED$, $c_H(y) = \CBLUE$,
   and $c_H(z) = \CGREEN$, respectively. Suppose that in $G$, the core vertices are colored $c_G(u)=\CRED$,
   $c_G(v)=\CBLUE$, $c_G(w)=\CGREEN$. We select the copy of $H'$ associated with the mapping $\phi(u)=x$, $\phi(v)=y$,
   $\phi(w)=z$. The equality gadgets force $S_x$ (the red class of $H'$) to be colored the same as $u$ (\CRED), $S_y$ to
   be colored the same as $v$ (\CBLUE), and $S_z$ to be colored the same as $w$ (\CGREEN). The added edges connect $u$
   (\CRED) to the neighbors of $x$ (which are non-red). Thus, the subgraph $uvw \cup V(H'_{\phi})$ forms a copy of $H$
   where vertices have exactly the colors prescribed by $c_H$.

   \paragraph*{General argument.} Let $\phi$ be the mapping defined by the imposed coloring of $G$, mapping a vertex
   $\ell \in \set{u, v, w}$ to the unique vertex in $\set{x, y, z}$ such that $c_G(\ell) = c_H(\phi(\ell))$. Consider
   the subgraph $\set{u, v, w} \cup V(H'_{\phi})$. The equality gadgets force the color classes of $H'_{\phi}$ to match
   the colors of the corresponding core vertices. Specifically, vertices in $S_{\phi(u)}$ must have color $c_G(u)$,
   which by our choice of $\phi$ is $c_H(\phi(u))$. Consequently, the coloring of this subgraph in $G$ is isomorphic to
   the coloring $c_H$ of $H$. Thus, $G$ contains a colored copy of $H$.
\end{proof}

%% file: sections/equiv.tex
In this section, we prove the main theorem of this paper, \cref{thm:main}, showing that Hypotheses
\labelcref{hypothesis:subgraph}, \labelcref{hypothesis:induced,hypothesis:colored} are all equivalent.

\begin{proof}[Proof of \cref{thm:main}]
   We begin by showing an equivalence between \cref{hypothesis:subgraph} (the subgraph setting) and
   \cref{hypothesis:induced} (the induced setting), using \cref{thm:induced-to-subgraph}.

   \paragraph*{Subgraph $\implies$ Induced.}
   Suppose that \cref{hypothesis:subgraph} is true. Recall that the lower bounds from~\cite{abboud2026triangle}
   naturally extend to the induced setting because the class of $H$-free graphs is a subset of the class of induced
   $H$-free graphs. Thus, if the problem is ``hard'' for $H$-free graphs, it is certainly ``hard'' for the superset.

   It remains to show that the ``easy'' patterns also extend to the induced setting. Let $H$ be a $3$-colorable pattern
   containing at most one triangle. Since $\Hplus$ preserves $3$-colorability and the triangle count, our assumption
   that~\cref{hypothesis:subgraph} holds implies that Triangle Detection in $\Hplus$-free graphs is solvable in
   $n^{3-\eps}$ time, for some $\eps > 0$. Consequently, by \cref{thm:induced-to-subgraph}, there exists an algorithm
   for induced $H$-free graphs running in truly subcubic time $\tOrder(n^{2.5} + n^{3/2} (\sqrt{n})^{3-\eps}) =
   \tOrder(n^{2.5}+n^{3-\eps/2})$. Therefore, if \cref{hypothesis:subgraph} holds, then \cref{hypothesis:induced} holds
   as well.

   \paragraph*{Induced $\implies$ Subgraph.}
   Conversely, a confirmation of \cref{hypothesis:induced} naturally implies a confirmation of
   \cref{hypothesis:subgraph}.

   The ``hard'' patterns are already established in~\cite{abboud2026triangle}. For the ``easy'' patterns, any algorithm
   that solves Triangle Detection in induced $H$-free graphs also works for $H$-free graphs, as the latter is a subset
   of the former. Thus, the two hypotheses are equivalent.

   \medskip
   We next establish the equivalence between \cref{hypothesis:subgraph} (the subgraph setting) and
   \cref{hypothesis:colored} (the colored setting), using \cref{thm:colored-to-subgraph}.

   \paragraph*{Subgraph $\implies$ Colored.}
   Suppose that \cref{hypothesis:subgraph} is true. First, consider the ``hard'' patterns. For $3$-colorable patterns
   $H$ that contain more than one triangle, the lower bounds from~\cite{abboud2026triangle} extend to the colored
   setting via color-coding. Specifically, take any $H$-free graph $G$. We can color-code it (e.g., using random
   partitioning or a perfect hash family~\cite{alon1995color}) to produce a $3$-colored subgraph of $G$ that preserves
   triangles with constant probability. Moreover, such subgraph is colored $H$-free for every possible coloring of $H$,
   since $G$ contained no copy of $H$ at all. Thus, a truly subcubic algorithm for colored $H$-free graphs would imply,
   by standard repetition of the color-coding step, a truly subcubic algorithm for $H$-free graphs that succeeds with
   high probability. This implies that patterns containing more than one triangle are ``hard'' in the colored setting as
   well.

   It remains to show that the ``easy'' patterns also extend. Let $H$ be a $3$-colored pattern containing at most one
   triangle. By \cref{thm:colored-to-subgraph}, there exists a $3$-colorable pattern $\Hstar$ containing the same number
   of triangles as $H$ such that every colored $H$-free graph is also $\Hstar$-free. By our assumption that
   \cref{hypothesis:subgraph} holds, Triangle Detection in $\Hstar$-free graphs is solvable in $n^{3-\eps}$ time, for
   some $\eps > 0$. Hence, we can run the algorithm for $\Hstar$-free graphs on $G$ to solve Triangle Detection in
   colored $H$-free graphs in $n^{3-\eps}$ time. Consequently, if \cref{hypothesis:subgraph} holds,
   \cref{hypothesis:colored} holds as well.

   \paragraph*{Colored $\implies$ Subgraph.}
   Conversely, suppose that \cref{hypothesis:colored} is true. The ``hard'' patterns are already established
   in~\cite{abboud2026triangle}. For the ``easy'' patterns, let $H$ be a $3$-colorable pattern with at most one
   triangle. Given an uncolored $H$-free graph $G$, we color-code it (as in the previous paragraph) to obtain a
   $3$-colored subgraph $G'$ of $G$ that preserves triangles with high probability. Fix an arbitrary $3$-coloring of $H$
   (such a coloring exists since $H$ is $3$-colorable). Since $G$ is $H$-free, $G'$ contains no colored copies of $H$.
   By the assumption that~\cref{hypothesis:colored} holds, there exists a truly subcubic algorithm for colored $H$-free
   graphs. Hence, by standard repetition of the color-coding step, there is a truly subcubic algorithm that decides
   whether $G$ contains a triangle with high probability. Thus, $H$ is an ``easy'' pattern in the subgraph setting, and
   the two hypotheses are equivalent.

   This completes the proof, showing that Hypotheses \labelcref{hypothesis:subgraph},
   \labelcref{hypothesis:induced,hypothesis:colored} are all equivalent.
\end{proof}

%% file: sections/applications.tex
Notably, our reduction for induced subgraphs (\cref{thm:induced-to-subgraph}) serves not only as a bridge between
\cref{hypothesis:subgraph} and \labelcref{hypothesis:induced}, but also as an algorithmic tool for induced patterns that
do not a priori seem to admit a truly subcubic algorithm. We observe that while it is unclear how to efficiently detect
a triangle in an induced $P_6$-free graph or an induced $C_7$-free graph, their augmented patterns admit a
\emph{degenerate coloring}.

\begin{definition}
   \label{def:degenerate}
   A $3$-colored graph $H$ containing at most one triangle is \emph{degenerately colored} if, in every induced subgraph
   of $H$ that is not a triangle, there exists a vertex with a monochromatic neighborhood.
\end{definition}

It was shown in~\cite[Theorem 2]{abboud2026triangle} that if a triangle-free pattern $H$ admits a degenerate coloring,
then Triangle Detection in $H$-free graphs is solvable in $\tOrder(n^{3-2^{-(k-3)}})$ time, where $k = |V(H)|$. By
applying the reduction from \cref{thm:induced-to-subgraph}, we obtain the following result:

\begin{proposition}
   \label{prop:P6-C7}
   Let $a = |V(P_6^+)| = 26$ and $b = |V(C_7^+)| = 35$. There is a combinatorial algorithm for Triangle Detection in
   induced $P_6$-free graphs and induced $C_7$-free graphs, running in time \[ \tOrder\left(n^{3/2} \cdot
   \sqrt{n^{3-2^{-(a-3)}}}\right) = \tOrder(n^{3-2^{-(a-2)}}), \] and \[ \tOrder\left(n^{3/2} \cdot
   \sqrt{n^{3-2^{-(b-3)}}}\right)= \tOrder(n^{3-2^{-(b-2)}}), \] respectively.
\end{proposition}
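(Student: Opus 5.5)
The plan is to apply \cref{thm:induced-to-subgraph} with $H\in\set{P_6,C_7}$ and feed the resulting $\Hplus$-free instances to the algorithm of~\cite[Theorem 2]{abboud2026triangle}. This requires three facts about $\Hplus$: it is triangle-free, it has the claimed number of vertices, and it admits a degenerate coloring.

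\textbf{Triangle-freeness and vertex counts.} Both $P_6$ and $C_7$ are triangle-free, and by the observation after the definition of $\Hplus$ the augmented pattern has the same number of triangles, so $\Hplus$ is triangle-free. For the size, $P_6$ has $\binom{6}{2}-5=10$ non-edges, so $|V(P_6^+)|=6+2\cdot 10=26$. Likewise $C_7$ has $\binom{7}{2}-7=14$ non-edges, so $|V(C_7^+)|=7+2\cdot14=35$.

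\textbf{Degenerate coloring (main step).} I would build the coloring in two stages. First fix a $3$-coloring $c$ of $H$. Each gadget vertex $x_{uv}$ or $y_{uv}$ has neighborhood exactly $\set{u,v}$. If $c(u)\neq c(v)$, its color is forced to be the third color; if $c(u)=c(v)$, its color may be either of the other two. The verification of \cref{def:degenerate} then goes by a peeling argument on an arbitrary induced subgraph $F$ of $\Hplus$:
\begin{itemize}
\item If $F$ contains a gadget vertex with at most one neighbor in $F$, or one whose two endpoints have the same color, that vertex has a monochromatic neighborhood.
\item Otherwise every gadget vertex of $F$ sits on a bichromatic non-edge $uv$ with both $u,v\in F$. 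We then need some $H$-vertex of $F$ whose $F$-neighbors (original $H$-neighbors together with the third-color gadget vertices) all share one color.
\end{itemize}
This suggests choosing $c$ to make the bichromatic non-edges of $H$ rare and structured. For $P_6$ I would first try $1,2,1,2,1,3$, and for $C_7$ the coloring $1,2,1,2,1,2,3$. The aim is that an endpoint vertex, such as the vertex colored $3$ or an end of the path, has all its surviving neighbors in a single color after the gadgets hanging on monochromatic pairs are peeled off.

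This case analysis is the step I expect to be the main obstacle. If no clean hand argument emerges, I would instead certify degeneracy by exhaustively checking, by computer, all induced subgraphs, or more efficiently a greedy removal order, for a suitably chosen coloring.

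\textbf{Running time.} Given degeneracy, \cite[Theorem 2]{abboud2026triangle} gives $T(N)=\tOrder(N^{3-2^{-(a-3)}})$ for $P_6^+$-free graphs, and the analogous bound with $b$ for $C_7^+$-free graphs. Plugging into \cref{thm:induced-to-subgraph} gives a running time of
\[
\tOrder\left(n^{2.5}+n^{3/2}\cdot \sqrt{n^{3-2^{-(a-3)}}}\right)=\tOrder\left(n^{3-2^{-(a-2)}}\right),
\]
since the second term dominates $n^{2.5}$. The $C_7$ case is identical with $b$ in place of $a$. The algorithm stays combinatorial, because the reduction uses only the coloring of~\cite{dumitrescu2025finding}, the sieving of \cref{lem:sieving}, and repeated oracle calls.
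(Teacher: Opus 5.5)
Your reduction skeleton, vertex counts, triangle-freeness and exponent arithmetic all match the paper. The paper proves the proposition exactly by plugging $P_6^+$ and $C_7^+$ into \cref{thm:induced-to-subgraph} and \cite[Theorem 2]{abboud2026triangle}. The gap is in the only step with real content. You never exhibit a degenerate coloring, whereas the paper supplies explicit ones in \cref{fig:P6,fig:C7}, and both colorings you propose to try are \emph{not} degenerate.

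\textbf{The $C_7$ candidate.} Color $v_1,\dots,v_7$ (in cycle order) by $1,2,1,2,1,2,3$ and delete the gadget vertices on monochromatic non-edges. In the remaining induced subgraph:
\begin{itemize}
\item $v_1,v_6,v_7$ already see two colors on the cycle;
\item each of $v_2,v_3,v_4,v_5$ sees its cycle neighbors' color together with color $3$, through $x_{v_2v_5}$, $x_{v_3v_6}$, $x_{v_1v_4}$, $x_{v_2v_5}$ respectively;
\item every surviving gadget vertex has two differently colored endpoints.
\end{itemize}
So no vertex has a monochromatic neighborhood.

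\textbf{The $P_6$ candidate.} Color $v_1,\dots,v_6$ by $1,2,1,2,1,3$. The only $H$-vertex you can peel is $v_3$. Afterwards, the induced subgraph on $\{v_1,v_2,v_4,v_5,v_6\}$ together with the gadgets on $v_1v_4,v_1v_6,v_2v_5,v_2v_6,v_4v_6$ is stuck: $v_1$ sees colors $2,3$; $v_2$ and $v_4$ see $1,3$; $v_5$ sees $2,3$; $v_6$ sees $1,2$.

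\textbf{How to fix it.} Your heuristic of making bichromatic non-edges rare is not the right guide. A gadget on a bichromatic pair can leave only after one of its endpoints does, and a gadget on $vw$ receives the color different from $c(v)$ and $c(w)$. Hence an $H$-vertex $v$ can be peeled from the surviving set $R\subseteq V(H)$ exactly when $N_H(v)\cap R=\{w\in R : c(w)=\alpha\}$ for some color $\alpha\neq c(v)$. Having a monochromatic neighborhood is inherited by every induced subgraph containing the vertex, so a complete peeling order certifies \cref{def:degenerate}.

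With this criterion, certificates are easy to find. Remove the gadgets on monochromatic pairs first, and every other gadget as soon as one of its endpoints is gone. Then:
\begin{itemize}
\item color $v_1,\dots,v_6$ of $P_6$ by $1,3,1,2,1,2$ and peel $v_1,v_4,v_3,v_5,v_2,v_6$;
\item color $v_1,\dots,v_7$ of $C_7$ by $3,1,3,1,2,1,2$ and peel $v_2,v_5,v_1,v_4,v_6,v_3,v_7$.
\end{itemize}
Supplying such a concrete coloring and peeling order, rather than deferring to an unspecified computer search, is what your proof needs to be complete.
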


The degenerate colorings of $P_6^{+}$ and $C_7^{+}$ are illustrated in \cref{fig:P6,fig:C7}, respectively.

\begin{figure}[htbp]
   \centering
   \includegraphics[scale=1.0]{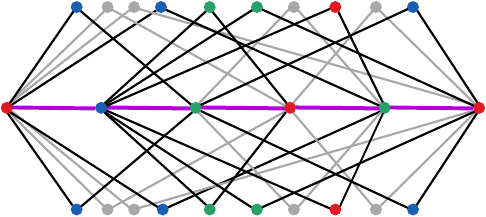}
   \caption{The augmented pattern $P_6^+$ with a degenerate coloring.
   The edges of the original $P_6$ are represented by highlighted purple lines. Wedges connecting the endpoints of
   non-edges are shown in gray whenever the endpoints share the same color; this is because such paths do not obstruct
   the degenerate coloring. From this coloring, it is evident that $P_6^+$ admits a degenerate coloring. Consequently,
   Triangle Detection in induced $P_6$-free graphs can be solved in truly subcubic time using~\cite[Theorem
   2]{abboud2026triangle}.}
   \label{fig:P6}
\end{figure}

\begin{figure}[htbp]
   \centering
   \includegraphics[scale=0.8]{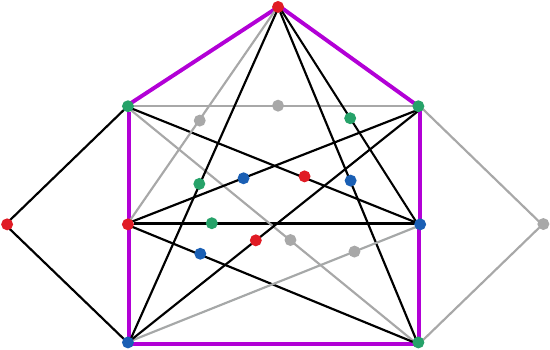}
   \caption{The augmented pattern $C_7^+$ with a degenerate coloring.
   The edges of the original $C_7$ are represented by highlighted purple lines. Paths connecting the endpoints of
   non-edges are shown in gray whenever the endpoints share the same color; this is because such paths do not obstruct
   the degenerate coloring. (Note: For visual clarity, only one wedge is shown for these connections, rather than the
   two required by the formal definition of $\Hplus$). From this coloring, it is evident that $C_7^+$ admits a
   degenerate coloring. Consequently, Triangle Detection in induced $C_7$-free graphs can be solved in truly subcubic
   time using~\cite[Theorem 2]{abboud2026triangle}.}
   \label{fig:C7}
\end{figure}

While our reduction to induced subgraphs yields some unexpected algorithmic results, the reduction to colored subgraphs
does not achieve similar results. The reason for this is that the augmented pattern $\Hstar$ contains a subgraph that
does not admit a degenerate coloring. Specifically, the equality gadget is derived from the Gr\"otzsch graph by removing
an edge, which is not degenerately colorable. Therefore, we cannot use~\cite[Theorem 2]{abboud2026triangle} in
conjunction with \cref{thm:colored-to-subgraph} to obtain new algorithms for colored patterns.

%% file: sections/k_clique.tex
In this section, we show how our results can be extended from Triangle Detection to $k$-Clique Detection, i.e., from
$k=3$ to any $k\geq 3$. The motivation for studying $k$-Clique in $H$-free graphs closely parallels the $k=3$ case
discussed in the introduction. The brute-force $\Order(n^3)$ algorithm for Triangle Detection extends to a brute-force
$\Order(n^k)$ algorithm for $k$-Clique. It has been a major open question to break this barrier without using FMM, and
the conjecture that this is impossible was put forth in 2015 by Abboud, Backurs, and Vassilevska Williams \cite{ABV15}
and subsequently used as the basis for many conditional lower bounds (e.g.,
\cite{bringmann2017dichotomy,chang2019hardness,lincoln2018tight,bringmann2017clique}).

\begin{conjecture}[Combinatorial $k$-Clique Detection]
   \label{conj:k-clique}
   For any $\eps > 0$, there is no combinatorial $k$-Clique Detection algorithm that runs in $\Order(n^{k-\eps})$ time.
\end{conjecture}

Similarly to the $k=3$ case, one hopes that breaking the $n^k$ barrier with a new algorithm that avoids FMM would lead
to a breakthrough for one of the generalizations of the problem, such as the weighted case where no known algorithm
(even with FMM) can find the minimum weight $k$-Clique in $\Order(n^{k-\eps})$ time. The latter was shown to be a
barrier to faster algorithms for several problems (e.g.,
\cite{abboud2014consequences,backurs2017improving,backurs2016tight,bringmann2020tree}). Understanding the complexity of
$k$-Clique in $H$-free graphs could also inform algorithmic attacks on $k$-Clique that employ the
structure-vs-randomness paradigm. We refer the reader to \cite{AFS24} for further discussion on the $k$-Clique
conjecture.

The natural extension of the dichotomy hypothesis of \cite{abboud2026triangle} from the $k=3$ case to $k$-Clique is the
following.

\begin{hypothesis}[Dichotomy for $k$-Clique Detection in $H$-Free Graphs]
   \label{hypothesis:K-Clique-H-free}
   The $k$-Clique Detection problem in $H$-free graphs is:
   \begin{itemize}
      \item solvable in $\Order(n^{k-\eps})$ time via a combinatorial algorithm for some $\eps>0$, if $H$ is
         $k$-colorable and contains at most one $k$-clique; and
      \item as hard as $k$-Clique Detection in general graphs, otherwise.
   \end{itemize}
\end{hypothesis}

By simple extensions of the results of \cite{abboud2026triangle}, one arrives at a similar situation for this hypothesis
compared to the $k=3$ case: the hardness part can be proven (i.e., patterns that are not $k$-colorable or contain more
than one $k$-clique are ``hard''), whereas the algorithmic part is only partially known (several positive results can be
obtained, but they do not cover all patterns).

The main result of this section is to show that the primary message of this paper applies to $k$-Clique as well: the
\emph{induced} variant of the dichotomy hypothesis holds if and only if the non-induced variant holds.

Formally, the main result in this section is the following theorem:
\begin{theorem}
   \label{thm:induced-to-subgraph-k-clique}
   For every $k$-colorable pattern $H$ containing at most one $k$-clique, there exists a $k$-colorable pattern
   $\Hplus_k$ containing the same number of $k$-cliques as $H$, such that $k$-Clique Detection in induced $H$-free
   graphs reduces to $k$-Clique Detection in $\Hplus_k$-free graphs. Specifically, suppose there is an algorithm for
   $k$-Clique Detection in $\Hplus_k$-free graphs that runs in time $T(N)$ on $N$-vertex graphs and succeeds with
   probability at least $2/3$. In that case, there is an algorithm for $k$-Clique Detection in induced $H$-free graphs
   running in time \[ \tOrder\left(n^{k - \frac{k-2}{k-1}} + n^{k(1-\frac{1}{k-1})} \cdot T(n^{\frac{1}{k-1}}) \right),
   \] and succeeding with probability at least $1-\Order(1/n)$.
\end{theorem}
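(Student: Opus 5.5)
We follow the same three-step template as for \cref{thm:induced-to-subgraph}: an augmented pattern, a color-coding step that outputs only induced subgraphs, and isolation to a unique $k$-clique.

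\textbf{The augmented pattern.} Let $\Hplus_k$ be obtained from $H$ by attaching, for every non-edge $uv$ of $H$, two vertex-disjoint copies of $K_{k-2}$. In each copy, every new vertex is adjacent to both $u$ and $v$. Then $uv$ together with either copy would form a $k$-clique, while no $k$-clique of $\Hplus_k$ uses new vertices, since $uv$ is absent. The clique count is therefore preserved. For $k$-colorability, fix a $k$-coloring $c$ of $H$. If $c(u)\neq c(v)$, color each attached $K_{k-2}$ with the other $k-2$ colors. If $c(u)=c(v)$, there are $k-1$ colors available, which also suffices.

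\textbf{Correctness of the augmentation.} Suppose $G'$ is induced $H$-free and contains $\Hplus_k$. Then the embedded copy of $H$ has some non-edge $uv$ present as an edge of $G'$. This yields two $k$-cliques, $\{u,v\}\cup Q_1$ and $\{u,v\}\cup Q_2$, which are distinct. Hence any induced $H$-free graph with at most one $k$-clique is $\Hplus_k$-free, exactly as in Cases 1 and 2 of the proof of \cref{thm:induced-to-subgraph}.

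\textbf{Color-coding by induced subgraphs.} The generalization of \cref{lem:color-coding} is the main step. I would use the generalization of Dumitrescu's result, \cref{lem:clique-or-coloring}. In near-linear or $\Order(n^2)$ time it either finds a $k$-clique or produces a proper coloring with $\Order(n^{(k-2)/(k-1)})$ colors. The idea is a greedy argument: a vertex of degree at least $n^{(k-2)/(k-1)}$ has a neighborhood that either contains a $K_{k-1}$, found recursively, or can be colored with fewer colors; alternatively, low-degree vertices are peeled off and colored greedily.

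Next, refine the coloring into classes of size $\Order(n^{1/(k-1)})$, keeping the number of colors at $\Order(n^{(k-2)/(k-1)})$. Take the induced subgraph on every $k$-tuple of color classes. This produces $\Order(n^{k(k-2)/(k-1)})=\Order(n^{k(1-\frac1{k-1})})$ instances, each properly $k$-colored on $\Order(n^{1/(k-1)})$ vertices. Every $k$-clique lies in exactly one instance, since its vertices have pairwise distinct colors.

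Generating the instances costs (number of instances) times $\Order(n^{2/(k-1)})$ edges each, namely $n^{k-\frac{k^2-2k}{k-1}+\frac{2}{k-1}}\cdot$, which is not obviously equal to the stated first term. I would instead bound it by enumerating only the $\Order(n^{k-2})$-size output carefully, or accept $\tOrder(n^{k-\frac{k-2}{k-1}})$ from the coloring step with pairwise edge extraction. Getting the exact bookkeeping of this term is where I expect care is needed. The key difficulty, though, is proving the clique-or-coloring lemma with $n^{(k-2)/(k-1)}$ colors. I would try induction on $k$: the $k$ case applies the $(k-1)$ case to the neighborhoods of high-degree vertices, with degree threshold $d=n^{(k-2)/(k-1)}$, balancing $d$ greedy colors against the recursively obtained colorings.

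\textbf{Isolation and the oracle.} Extend \cref{lem:sieving} to $k$-colored graphs by subsampling each color class in turn, with exponential search over the number of vertices of that class lying in surviving $k$-cliques. This gives $\tOrder(1)$ induced subgraphs, and with probability $1-\Order(1/n)$ one of them has a unique $k$-clique.

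Finally, run the amplified oracle, repeated $\Theta(\log n)$ times with a majority vote, on every output instance and answer YES if any instance says YES. A union bound over the $\tOrder(n^{k(1-\frac1{k-1})})$ calls gives the time and error bounds stated in the theorem.
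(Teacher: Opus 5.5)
Your proposal is correct and follows essentially the same route as the paper: the same $\Hplus_k$ with two attached $(k-2)$-cliques per non-edge, the clique-or-coloring lemma (whose inductive proof with threshold $n^{1-\frac{1}{k-1}}$ matches your sketch), refinement and induced subgraphs on $k$-tuples of color classes, per-class random sieving, and an amplified oracle with a union bound. The running-time term you flagged is only an arithmetic slip: the number of instances is $n^{k-\frac{k}{k-1}}$, so multiplying by $\Order(n^{2/(k-1)})$ edges per instance gives exactly $n^{k-\frac{k-2}{k-1}}$, which is the paper's output-size bound $m\,n^{(k-2)^2/(k-1)}$ at $m=\Order(n^2)$, so none of your alternative bookkeeping is needed.
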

Note that the theorem implies that if $T(N) \leq N^{k-\eps}$, then the running time we get is polynomially smaller than
$n^k$: $n^{k-\eps/(k-1)}$.

The generalization of the pattern $\Hplus$ to $\Hplus_k$ is as follows:

\begin{definition}
   Given a pattern $H$, the \emph{augmented pattern} $\Hplus_k$ is obtained by attaching two $(k-2)$-cliques to the
   endpoints of every non-edge of $H$. Formally, $V(H) \subseteq V(\Hplus_k)$, $E(H) \subseteq E(\Hplus_k)$, and for
   every $u,v \in V(H)$ such that $uv \notin E(H)$, we add two distinct $(k-2)$-cliques $X_{uv}, Y_{uv}$ and edges
   $xu,xv,yu,yv$ for every $x \in X_{uv}$ and $y \in Y_{uv}$.
\end{definition}

Note that $\Hplus_k$ admits the following structural properties: If $H$ is $k$-colorable, so is $\Hplus_k$, since
$X_{uv}$ and $Y_{uv}$ see at most two colors in a $k$-coloring of $H$ and thus can be colored with the remaining $k-2$
colors. The number of $k$-cliques in $\Hplus_k$ is preserved (i.e., it is the same as in $H$). This is because $X_{uv}$
and $Y_{uv}$ are not adjacent to any edge in $H$.

To prove the theorem, we will need a reduction to Unique $k$-Clique Detection that preserves induced $H$-freeness.

\begin{lemma}[Reduction to Unique $k$-Clique Detection]
   \label{lem:reduction-to-unique-k-clique}
   There is a randomized reduction that, given a graph $G$ with $n$ vertices and $m$ edges, outputs
   $q=\tOrder(n^{k(1-\frac{1}{k-1})})$ induced subgraphs $G_1, \ldots, G_q$, such that:
   \begin{enumerate}
      \item If $G$ contains a $k$-clique, then with probability $1 - \Order(1/n)$, there exists at least one $G_i$ that
         contains exactly one $k$-clique.
      \item Each $G_i$ has $\Order(n^{\frac{1}{k-1}})$ vertices.
   \end{enumerate}
   The reduction runs in $\tOrder(m n^{(k-2)(1- \frac{1}{k-1})})$ time.
\end{lemma}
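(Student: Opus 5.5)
The plan mirrors the proof of \cref{lem:reduction-to-unique}: first a color-coding step that produces \emph{induced} subgraphs, then the isolation step. For the color-coding step I would use the generalized ``clique-or-coloring'' lemma (\cref{lem:clique-or-coloring}, referenced in the footnote of \cref{lem:color-coding}). This is a generalization of Dumitrescu's algorithm: it either finds a $k$-clique in $G$, or outputs a proper coloring of $G$ with $\Order(n^{1-\frac{1}{k-1}})$ colors.

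\textbf{Greedy coloring.} The natural greedy argument goes as follows. While some vertex $v$ has degree at least $n^{1-\frac{1}{k-1}}$ in the remaining graph, recurse on $N(v)$ looking for a $(k-1)$-clique. Either the recursion finds one, giving a $k$-clique through $v$, or it properly colors $N(v)$ with $\Order(|N(v)|^{1-\frac{1}{k-2}})$ colors. In the latter case, those colors plus the independent-set structure let us peel off many vertices per color. Otherwise the remaining graph has degree below $n^{1-\frac1{k-1}}$ and is colored greedily.

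I expect this recursion to be the main obstacle: one must both bound the total number of colors by $\Order(n^{1-\frac{1}{k-1}})$ and keep the running time within $\tOrder(m n^{(k-2)(1-\frac{1}{k-1})})$. Since I may assume the lemma, I will simply invoke it.

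\textbf{Refinement and instances.} Given the coloring, I refine it as in \cref{lem:color-coding}: repeatedly split any class larger than $2n^{\frac{1}{k-1}}$ into blocks of size $n^{\frac1{k-1}}$. This adds at most $n^{1-\frac1{k-1}}$ classes, so we still have $\Order(n^{1-\frac{1}{k-1}})$ classes, each of size $\Order(n^{\frac{1}{k-1}})$. For every $k$-tuple of distinct classes we output the induced subgraph on their union. This yields $\Order(n^{k(1-\frac{1}{k-1})})$ properly $k$-colored induced subgraphs on $\Order(n^{\frac1{k-1}})$ vertices each. Every $k$-clique uses $k$ distinct classes, because classes are independent sets, so it appears in exactly one instance.

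\textbf{Cost of generating instances.} After sorting adjacency lists by class, we generate the instances. Each instance has at most $\Order(n^{\frac{2}{k-1}})$ edges, so the total output size is about $n^{k-\frac{k}{k-1}+\frac{2}{k-1}} = n^{k-\frac{k-2}{k-1}}$. Equivalently, each edge of $G$ is charged once per choice of the other $k-2$ classes, giving $m\cdot n^{(k-2)(1-\frac1{k-1})}$, which matches the claimed bound.

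\textbf{Isolation.} Finally, I apply the $k$-colored analogue of the sieving lemma (\cref{lem:sieving}) to each instance. For each color class in turn, guess by exponential search the number $t$ of vertices of that class lying in a $k$-clique, and subsample that class's vertices with probability $\Theta(1/t)$. Repeating over all $k$ classes with $\Order(\log n)$ guesses each and $\Order(\log n)$ trials yields $\tOrder(1)$ induced subgraphs per instance. With probability $1-\Order(1/n)$, one of them contains exactly one $k$-clique. This step relies on the instance being properly $k$-colored, so every $k$-clique is colorful; otherwise non-colorful cliques could survive the sampling. Subsampling vertices (rather than edges) keeps every output an induced subgraph of $G$.

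Combining the two steps gives $q=\tOrder(n^{k(1-\frac{1}{k-1})})$ outputs, each on $\Order(n^{\frac{1}{k-1}})$ vertices, within the stated time.
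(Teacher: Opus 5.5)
Your proposal is correct and follows the paper's proof essentially step for step. The paper likewise obtains the $\Order(n^{1-\frac{1}{k-1}})$-coloring from \cref{lem:clique-or-coloring}. It then refines the classes and outputs one induced instance per $k$-tuple (\cref{lem:color-coding-k-clique}), bounding the time by the same per-edge charging argument. Finally, it applies the vertex-subsampling isolation lemma \cref{lem:sieving-k-color} to each properly $k$-colored instance, so every output remains an induced subgraph of $G$.
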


As with triangles, the implementation is via color-coding that preserves induced $H$-freeness, followed by an isolation
step. We present a color-coding algorithm below which is essentially a generalization of~\cite{dumitrescu2025finding} to
$k$-cliques.

\begin{lemma}[$k$-Clique or $\Order(n^{1-\frac{1}{k-1}})$-Coloring]
   \label{lem:clique-or-coloring}
   For every $k \geq 2$, there is an algorithm that runs in time $\Order(n+m)$ and finds either a $k$-clique or an
   $\Order(n^{1-\frac{1}{k-1}})$-coloring.
\end{lemma}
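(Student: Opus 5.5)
We argue by induction on $k$, generalizing the greedy argument of Dumitrescu for the triangle case. The base case $k=2$ is immediate: either $G$ has an edge (a $2$-clique), which we find in $\Order(n+m)$ time by scanning adjacency lists, or $G$ is edgeless and a single color suffices, i.e., $\Order(n^{0})=\Order(1)$ colors. For the inductive step with $k\geq 3$, set the threshold $d = n^{1-\frac{1}{k-1}}$.

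We process the graph in rounds. While some vertex $v$ of current degree at least $d$ exists in the remaining graph $G'$, we recurse on the subgraph induced by $N_{G'}(v)$, looking for a $(k-1)$-clique. If the recursion finds one, adding $v$ yields a $k$-clique and we stop. Otherwise the recursion returns an $\Order(|N(v)|^{1-\frac{1}{k-2}})$-coloring of $N_{G'}(v)$. We assign these fresh colors to $N_{G'}(v)$ (the neighborhood is disjoint from previously colored, removed vertices) and delete $N_{G'}(v)$ from $G'$. Once every remaining vertex has degree below $d$, we color the rest greedily with at most $d$ further colors.

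The color count follows from concavity. If the removed neighborhoods have sizes $s_1,\dots,s_t$ with each $s_i\geq d$ and $\sum s_i\leq n$, then $t\leq n/d$. Because $x\mapsto x^{1-\frac{1}{k-2}}$ is concave, the total $\sum_i s_i^{1-\frac{1}{k-2}}$ is maximized when all $s_i=d$, giving
\[
(n/d)\cdot d^{1-\frac{1}{k-2}} = n\, d^{-\frac{1}{k-2}} = n\cdot n^{-\frac{1}{k-1}} = n^{1-\frac{1}{k-1}}.
\]
The greedy phase adds $d=n^{1-\frac{1}{k-1}}$ further colors, so the total is $\Order(n^{1-\frac{1}{k-1}})$ as required. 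The coloring is proper for three reasons: each removed set is colored properly by the recursion, distinct sets use disjoint palettes, and the greedy phase is proper on what remains.

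The main obstacle is the running time. Each recursive call on $N_{G'}(v)$ costs time linear in $|N(v)|$ plus the number of edges inside $N(v)$, and an edge can lie inside only one removed neighborhood, since that neighborhood is deleted afterwards. This charges $\Order(n+m)$ in total per level, provided the top-level bookkeeping is linear. The bookkeeping consists of building induced subgraphs, maintaining current degrees under deletions, and finding high-degree vertices. We handle it with a bucket queue over degrees, decrementing neighbors' degrees upon deletion, which costs $\Order(\deg)$ per deleted vertex. Extracting the induced adjacency of $N(v)$ requires marking its vertices and scanning their adjacency lists; this scan touches edges leaving $N(v)$ as well, but every scanned vertex is deleted, so each adjacency list is scanned $\Order(1)$ times at this level. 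Hence each level costs $\Order(n+m)$ and there are $k-2$ levels. Since $k$ is a fixed constant, the total time is $\Order(n+m)$. The greedy final phase is also linear, using current degrees below $d$ and a per-vertex scan of the colors used by its neighbors.
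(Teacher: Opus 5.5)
Your proof is correct and follows the paper's approach: induction on $k$, recursion with fresh palettes on neighborhoods of vertices whose current degree is at least $n^{1-\frac{1}{k-1}}$, a final greedy phase, and linear running time from the disjointness of the removed sets at each of the constantly many recursion levels. The one deviation is that you recurse on the entire current neighborhood and bound the color count by concavity ($s_i \ge d$ and $\sum_i s_i \le n$ give $\sum_i s_i^{1-\frac{1}{k-2}} \le n\, d^{-\frac{1}{k-2}}$), whereas the paper recurses on an arbitrary subset of exactly $\Delta$ neighbors so that the bound $\frac{n}{\Delta}\cdot (k-1)\Delta^{1-\frac{1}{k-2}}$ is immediate; your variant has the small side benefit that the center $v$ becomes isolated and is never selected again.
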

\begin{proof}
   The algorithm is recursive. We prove, by induction on $k$, the stronger statement that if the algorithm does not find
   a $k$-clique, then it outputs a coloring with at most $k n^{1-\frac{1}{k-1}}$ colors.

   For $k = 2$ the claim is trivial: we just have to find a single edge in the graph. If no edge is found, it means the
   graph is an independent set of size $n$ and can be colored with one color. For $k > 2$, let $\Delta =
   n^{1-\frac{1}{k-1}}$. As long as there is some vertex $v$ whose degree is at least $\Delta$, choose an arbitrary
   subset $U \subseteq N(v)$ of size $\Delta$, and apply the recursive algorithm to $G[U]$ with $k' = k-1$. If a
   $(k-1)$-clique is found in $G[U]$, we are done because, together with $v$, it forms a $k$-clique. Otherwise, by the
   induction hypothesis, we obtain a coloring of $G[U]$ with at most \[ (k-1)|U|^{1-\frac{1}{k-2}} =
   (k-1)\Delta^{1-\frac{1}{k-2}} \] distinct colors. We remove $U$ from the graph and repeat this process, each time
   using a fresh set of colors. This phase lasts at most $n/\Delta$ iterations because in each iteration we remove
   $\Delta$ vertices from the graph. Hence, the number of colors produced in this phase is at most \[ \frac{n}{\Delta}
   \cdot (k-1)\Delta^{1-\frac{1}{k-2}} = (k-1)\frac{n}{\Delta^{\frac{1}{k-2}}} = (k-1)n^{1-\frac{1}{k-1}}. \] After the
   first phase is over, the maximum degree is less than $\Delta$. Then, the greedy coloring algorithm produces at most
   $\Delta = n^{1-\frac{1}{k-1}}$ distinct colors (again using a fresh set of colors) in linear time~\cite[p.
   147]{bollobas1998modern}. In total, the number of colors used in both phases is at most $k n^{1-\frac{1}{k-1}}$.

   To see that the running time of this algorithm is linear, consider the first phase of the algorithm (as we already
   noted, the second phase takes linear time). Observe that the sets $U_1,U_2,\ldots$ of size $\Delta$ chosen at one
   recursive level are vertex-disjoint. Hence, the total time needed to construct the instances $G[U_1],G[U_2],\ldots$
   (by scanning, for each $i$ and each $u \in U_i$, the adjacency list of $u$) is $\Order(n+m)$. By induction, the
   recursive calls on these instances also take linear time in their sizes, which, by disjointness, sum to
   $\Order(n+m)$. Since the number of recursive levels is at most $k$, and $k$ is fixed, we conclude that the total
   running time across all levels is $\Order(m+n)$.

\end{proof}

Given \cref{lem:clique-or-coloring}, color-coding that preserves $H$-freeness follows:

\begin{lemma}[Color-coding for $k$-Clique Detection]
   \label{lem:color-coding-k-clique}
   There is a deterministic algorithm that reduces $k$-Clique Detection to $k$-colored $k$-Clique Detection, running in
   $\Order(m n^{(k-2)(1- \frac{1}{k-1})})$ time. It either finds a $k$-clique in $G$, or outputs
   $\Order(n^{k(1-\frac{1}{k-1})})$ instances, where each instance is a $k$-colored induced subgraph of $G$ on
   $\Order(n^{\frac{1}{k-1}})$ vertices. Every $k$-clique in $G$ is present in exactly one instance.
\end{lemma}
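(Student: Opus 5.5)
The plan mirrors the proof of \cref{lem:color-coding}, with \cref{lem:clique-or-coloring} in place of Dumitrescu's algorithm. Run the algorithm of \cref{lem:clique-or-coloring} on $G$ in $\Order(n+m)$ time. If it returns a $k$-clique, output it and stop. Otherwise we get a proper coloring with $\Order(n^{1-\frac{1}{k-1}})$ colors.

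Next, refine this coloring so that every color class has size $\Order(n^{\frac{1}{k-1}})$. Set $s = n^{\frac{1}{k-1}}$. While some class has more than $2s$ vertices, split off a block of exactly $s$ vertices and give it a fresh color. Each block has size $s$ and the blocks are disjoint, so at most $n/s = n^{1-\frac{1}{k-1}}$ new colors are created. The refined coloring is still proper, since subsets of independent sets are independent. It uses $L = \Order(n^{1-\frac{1}{k-1}})$ colors, and every class has size at most $2s$.

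For every $k$-subset of the $L$ color classes, output the induced subgraph of $G$ on the union of those $k$ classes, colored by the class labels. This gives $\binom{L}{k} = \Order(n^{k(1-\frac{1}{k-1})})$ instances. Each instance is an induced subgraph on $\Order(n^{\frac{1}{k-1}})$ vertices and is properly $k$-colored. Every $k$-clique of $G$ has its vertices in $k$ pairwise distinct classes, because the coloring is proper. Those classes form exactly one $k$-subset, so the clique appears in exactly one instance. Because the instances are induced subgraphs, induced $H$-freeness is preserved.

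The main obstacle is the running time, which must be charged to edges and not to the output size. Sort each adjacency list by color class, so that the edges of $G$ between any two classes can be listed in time proportional to their number. Fix an edge $e$ between classes $a$ and $b$. It lies in exactly the instances whose $k$-subset contains both $a$ and $b$, and there are $\binom{L-2}{k-2} = \Order(n^{(k-2)(1-\frac{1}{k-1})})$ of these. Writing every edge into each instance that contains it therefore costs $\Order(m\, n^{(k-2)(1-\frac{1}{k-1})})$ in total. Listing vertex sets costs $\Order(n^{k(1-\frac{1}{k-1})}\cdot n^{\frac{1}{k-1}}) = \Order(n^{(k-1)(1-\frac{1}{k-1})}\cdot n)$. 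Since $n^{1-\frac{1}{k-1}} \le n \le m$ (isolated vertices can be discarded first), this is within the same bound. To generate instances in output-linear time, enumerate the $k$-subsets and, for each pair of classes in a subset, copy the precomputed list of edges between that pair. The total work is bounded by the edge-charging count above plus the vertex term.
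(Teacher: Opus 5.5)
Your construction is the same as the paper's: apply \cref{lem:clique-or-coloring}, refine to classes of size at most $2s$ with $s=n^{\frac{1}{k-1}}$, output one induced instance per $k$-subset of the $L$ classes, and charge each edge to the subsets containing its two classes. The correctness part (each instance is properly $k$-colored and induced, and each $k$-clique lies in exactly one instance) is fine. The gap is the vertex term in the running time. Every class has at most $2s$ vertices, so $L=\Theta(n^{1-\frac{1}{k-1}})$. Each vertex lies in $\Theta(L^{k-1})$ instances, so writing out the vertex sets costs $\Theta(nL^{k-1})=\Theta(n^{k-1})$. Each edge lies in only $\Theta(L^{k-2})$ instances, so the edge cost is $\Theta(mL^{k-2})$. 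The ratio of the two is $\Theta(nL/m)$. Hence the vertex term fits in $\Order(mn^{(k-2)(1-\frac{1}{k-1})})$ only when $m=\Omega(n^{2-\frac{1}{k-1}})$. Your chain $n^{1-\frac{1}{k-1}}\le n\le m$ only bounds this ratio by $n$, so it loses a factor of $n$. For example, with $k=3$ and $m=\Theta(n)$, your algorithm spends $\Theta(n^2)$ time against a claimed $\Order(n^{3/2})$. Even just enumerating all $k$-subsets of classes costs $\Theta(L^k)$. This exceeds $mL^{k-2}$ whenever $m=o(L^2)=o(n^{2-\frac{2}{k-1}})$, which for $k\ge 4$ is a polynomially dense regime.

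The paper's one-line proof also counts only edges, so it glosses over the same term. This is harmless downstream, because \cref{thm:induced-to-subgraph-k-clique} plugs in $m=\Order(n^2)$ and $n^{k-1}\le n^{k-\frac{k-2}{k-1}}$. You can repair your proof in either of two ways. First, you can simply state the bound as $\Order(mn^{(k-2)(1-\frac{1}{k-1})}+n^{k-1})$. Second, you can achieve the stated bound by never touching edgeless subsets or isolated vertices, as follows. For each class pair $\{a,b\}$ with $E(a,b)\neq\emptyset$ (there are at most $m$ such pairs) and each $(k-2)$-subset $S$ of the other classes, emit the record $(\{a,b\}\cup S,\,E(a,b))$. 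Group the records by key with a $k$-pass radix sort in base $L$. For every nonempty group $T$, output $G[W_T]$, where $W_T$ is the set of endpoints of the group's edges. This takes $\Order(mL^{k-2})$ time overall. Each $G[W_T]$ is still an induced subgraph of $G$, because the group contains every edge of $G$ between classes of $T$. It is $k$-colored and has $\Order(n^{\frac{1}{k-1}})$ vertices. A $k$-clique with class set $T$ lies in $G[W_T]$ and in no other instance.
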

\begin{proof}
   The proof is similar to the case of $k=3$: We compute a coloring using \cref{lem:clique-or-coloring}, refine the
   partition to reduce large color classes to $\Order(n^{\frac{1}{k-1}})$ vertices, and then produce instances for each
   $k$-tuple of colors. The number of output instances is $\Order(n^{k(1-\frac{1}{k-1})})$. The running time is governed
   by the output size, which is $\Order(m n^{(k-2)(1-\frac{1}{k-1})})$ because each edge participates in
   $\Order((n^{1-\frac{1}{k-1}})^{k-2})$ $k$-tuples.
\end{proof}

The next ingredient is an isolation lemma for $k$-Cliques:

\begin{lemma}[Isolation for $k$-colored graphs]
   \label{lem:sieving-k-color}
   There is a near-linear time reduction that, given a $k$-colored graph $G$ with at most $n$ vertices in each part,
   outputs $\Order(\log^k(n))$ induced subgraphs of $G$, such that if $G$ contains a $k$-clique, then with probability
   at least $1-\Order(1/n)$, one of the subgraphs contains a unique $k$-clique.
\end{lemma}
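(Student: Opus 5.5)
We prove \cref{lem:sieving-k-color} by generalizing the vertex-subsampling sieve behind \cref{lem:sieving}: we isolate the $k$-clique one color class at a time, processing the classes $V_1,\ldots,V_k$ in order. Only vertices are ever deleted, never edges, so every output instance is an induced subgraph of $G$.

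\paragraph*{Construction.} For each $i$ we guess a scale $t_i \in \{1,2,4,\ldots,2^{\lceil\log n\rceil}\}$, which gives $\Order(\log^k n)$ guess vectors. For a guess vector $(t_1,\ldots,t_k)$ we build one subgraph:
\begin{itemize}
   \item keep each vertex of $V_1$ independently with probability $1/t_1$;
   \item then keep each vertex of $V_2$ independently with probability $1/t_2$;
   \item continue in the same way through $V_k$.
\end{itemize}
The output is the subgraph induced by the kept vertices. Each instance takes linear time to build, so the total time is near-linear. Because $G$ is properly $k$-colored, every $k$-clique uses exactly one vertex from each class. So a surviving instance has a unique $k$-clique exactly when the surviving cliques are pinned down to one vertex per class.

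\paragraph*{Analysis.} Suppose $G$ has a $k$-clique.
\begin{itemize}
   \item \emph{Stage 1.} Let $s_1 \ge 1$ be the number of vertices of $V_1$ that lie in some $k$-clique. Take $t_1$ to be the power of two with $t_1 \le s_1 < 2t_1$. Then with constant probability exactly one such vertex $v_1$ survives: the chance is $s_1 (1/t_1)(1-1/t_1)^{s_1-1} = \Omega(1)$. Conditioned on this, every surviving $k$-clique contains $v_1$, and at least one $k$-clique survives stage 1.
   \item \emph{Stage $i$.} Suppose the surviving cliques all contain $v_1,\ldots,v_{i-1}$, and at least one remains. Let $s_i$ be the number of vertices of $V_i$ lying in a surviving clique. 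Since this set is determined before stage $i$'s coins are flipped, choosing $t_i \approx s_i$ again leaves exactly one such vertex with constant probability.
\end{itemize}
After $k$ stages, every surviving clique contains $v_1,\ldots,v_k$, so exactly one $k$-clique survives.

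The correct guess vector therefore succeeds with probability $c^k = \Omega(1)$, since $k$ is fixed. We repeat the whole construction $\Order(\log n)$ times independently, so the failure probability drops to $\Order(1/n)$. This yields $\Order(\log^{k+1} n)$ instances. If the statement's count of $\Order(\log^k n)$ must be met exactly, we can absorb the extra factor by cruder bucketing of the guesses, or simply accept it as $\tOrder(1)$ as in \cref{lem:sieving}.

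\paragraph*{Main obstacle.} The delicate step is the conditioning between stages. The quantity $s_i$ depends on the random outcome of the earlier stages, so the ``correct'' guess $t_i$ is itself random. This is handled because we enumerate all guess vectors: for each fixed outcome of stages $1,\ldots,i-1$ there is some $t_i$ that works. We therefore argue with a single correlated run per guess vector, sharing the coins of earlier classes across guesses for later classes. Concretely, we draw one nested random sample for each class at each scale, and then pick the appropriate scale per class. The constant success probability then multiplies cleanly across the $k$ stages.
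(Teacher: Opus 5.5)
Your argument is correct and is essentially the paper's own sketch: subsample each color class independently at rates $2^{-i}$, isolate one class at a time by guessing the number of still-relevant vertices in that class, and boost the constant success probability with $O(\log n)$ repetitions. Drawing one shared sample per class and scale, so that the adaptively chosen $t_i$ is independent of class $i$'s coins, spells out a conditioning step the paper leaves implicit. One correction: like the paper's sketch, this yields $O(\log^{k+1} n)=\tOrder(1)$ instances, and your aside about removing the extra factor by cruder bucketing does not work, since scales coarser than a constant factor destroy the constant per-stage isolation probability; still, $\tOrder(1)$ is all that \cref{lem:reduction-to-unique-k-clique} uses.
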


We sketched the proof of this lemma for $k=3$ in \cref{sec:induced}. For larger $k$, the generalization is
straightforward. We apply the ``random sieve'' again: Subsample each color class independently with probabilities taken
from $\set{2^{-i} \mid 0 \leq i \leq \lceil \log_2 n \rceil}$. It is then possible to show that with constant
probability, some subgraph obtained by the sampling procedure contains exactly one $k$-clique if $G$ contains a
$k$-clique. With another $\Order(\log n)$ repetitions, we can boost the success probability to $1-\Order(1/n)$.

\begin{proof}[Proof of \cref{lem:reduction-to-unique-k-clique}]
   By applying \cref{lem:color-coding-k-clique} and then \cref{lem:sieving-k-color}, we obtain a reduction to Unique
   $k$-Clique Detection that preserves the induced $H$-freeness property (as each output instance is an induced subgraph
   of $G$). The running time is that of color-coding times a $\tOrder(1)$-factor for applying the isolation lemma to
   every graph in the output. Thus, the running time is $\tOrder( m n^{(k-2)(1-\frac{1}{k-1})})$, proving
   \cref{lem:reduction-to-unique-k-clique}.
\end{proof}

\begin{proof}[Proof of \cref{thm:induced-to-subgraph-k-clique}]
   We apply the reduction to Unique $k$-Clique Detection from \cref{lem:reduction-to-unique-k-clique} and run the
   supposed algorithm on every output graph $G_i$, returning YES if at least one execution returns YES. If $G$ is
   $k$-clique-free, then every output graph is also $k$-clique-free. In this case each $G_i$ is $\Hplus_k$-free:
   otherwise, because $G_i$ is induced $H$-free, a copy of $\Hplus_k$ would use a non-edge of $H$ as an edge and would
   therefore close a $k$-clique with one of the attached $(k-2)$-cliques. If $G$ contains a $k$-clique, then with high
   probability some output graph $G_i$ contains a unique $k$-clique. This graph is $\Hplus_k$-free, since a copy of
   $\Hplus_k$ in an induced $H$-free graph would close two distinct $k$-cliques using the two attached $(k-2)$-cliques,
   contradicting uniqueness. The total running time of the reduction is \[ \Order( m n^{(k-2)(1-\frac{1}{k-1})} +
   n^{k(1-\frac{1}{k-1})} T(n^{\frac{1}{k-1}}) ). \] Plugging in $m = \Order(n^2)$ gives the result. This completes the
   proof of \cref{thm:induced-to-subgraph-k-clique} (more details appear in the proof for $k=3$ in \cref{sec:induced}).
\end{proof}

Finally, the formal equivalence between \cref{hypothesis:K-Clique-H-free} and its induced analog is established: The
``hard'' patterns in the subgraph setting extend to the induced setting naturally, while the ``easy'' patterns extend
via \cref{thm:induced-to-subgraph-k-clique}. The proof of the other direction (induced $\implies$ subgraph) is
straightforward. The details are similar to the $k=3$ case (see \cref{sec:equiv}).

%% file: sections/conclusions.tex
We have shown that the three hypotheses: \labelcref{hypothesis:subgraph},
\labelcref{hypothesis:induced,hypothesis:colored}, are all equivalent. Consequently, the central open question remains
the validity of these hypotheses. Our results suggest a strategic approach for resolving this: to prove the hypotheses,
it suffices to focus on the restricted subgraph setting, which is arguably simpler. Conversely, to refute them, one can
target the more general colored or induced settings. For instance, demonstrating a cubic conditional lower bound for
induced $P_{100}$-free graphs would refute the dichotomy hypothesis for the subgraph setting as well.

Even if the hypotheses hold true, we can ask a more refined question regarding exact running times. In other words,
while our work shows that the hypotheses are equivalent in classifying patterns as ``easy'' or ``hard'', it does not
provide a complete picture of the exact running times for specific patterns. Specifically, for an ``easy'' pattern $H$,
are induced $H$-free graphs strictly harder to process than their subgraph counterparts? Our reduction implies a
slowdown due to color-coding and the increased size of the augmented pattern $\Hplus$. For example, Triangle Detection
in (subgraph) $C_7$-free graphs is solvable in $\tOrder(m+n^{5/3})$ time~\cite[Theorem 6]{abboud2026triangle}, but the
running time we obtain for induced $C_7$-free graphs is significantly larger: $\tOrder(n^{3-2^{-33}})$.

Note that even if the hypotheses are false and a different dichotomy emerges, our reduction may still have the potential
to prove equivalence between the induced and non-induced settings. Because the augmented pattern $\Hplus$ is
structurally simple, it may still preserve the properties relevant to an alternative dichotomy.

Finally, another open question is whether we can obtain a deterministic reduction from the induced setting to the
non-induced setting. Our current reduction is based on the random sieve approach, and derandomizing this technique
remains an open problem.